\documentclass[american,aps,pra,reprint, superscriptaddress]{revtex4-1}
\usepackage[unicode=true,pdfusetitle, bookmarks=true,bookmarksnumbered=false,bookmarksopen=false, breaklinks=false,pdfborder={0 0 0},backref=false,colorlinks=false] {hyperref}
\hypersetup{ colorlinks,linkcolor=myurlcolor,citecolor=myurlcolor,urlcolor=myurlcolor}
\usepackage{graphics,epstopdf,graphicx, amsthm, amsmath, amssymb, times, braket, color, bm}
\usepackage[up]{subfigure}
\usepackage{cleveref}

\definecolor{myurlcolor}{rgb}{0,0,0.7}
\newcommand{\proj}[1]{| #1\rangle\!\langle #1 |}
\newcommand{\Br}[1]{\left[#1\right]}

\def\md{{(\mathrm{d})}}
\theoremstyle{plain}
\newtheorem{thm}{\protect\theoremname}
\newtheorem{prop}[thm]{Proposition}
\newtheorem{lem}[thm]{Lemma}
\providecommand{\theoremname}{Theorem}
\newcommand*{\myproofname}{Proof}
\newenvironment{mproof}[1][\myproofname]{\begin{proof}[#1]}{\end{proof}}
\makeatother

\begin{document}

 \author{Kaifeng Bu}
 \email{bkf@zju.edu.cn}
 \affiliation{School of Mathematical Sciences, Zhejiang University, Hangzhou 310027, PR~China}

 \author{Uttam Singh}
 \email{uttamsingh@hri.res.in}
 \affiliation{Harish-Chandra Research Institute, Allahabad, 211019, India}

 \author{Junde Wu}
 \email{wjd@zju.edu.cn}
 \affiliation{School of Mathematical Sciences, Zhejiang University, Hangzhou 310027, PR~China}

\title{Catalytic coherence transformations}
\begin{abstract}
Catalytic coherence transformations allow the otherwise impossible state transformations using only incoherent operations with the aid of an auxiliary system with finite coherence which is not being consumed in anyway. Here we find the necessary and sufficient conditions for the deterministic and stochastic catalytic coherence transformations between pair of pure quantum states. In particular, we show that the simultaneous decrease of a family of R\'enyi entropies of the diagonal parts of the states under consideration are necessary and sufficient conditions for the deterministic catalytic coherence transformations. Similarly, for stochastic catalytic coherence transformations we find the necessary and sufficient conditions for achieving higher optimal probability of conversion. We, thus, completely characterize the coherence transformations amongst pure quantum states under incoherent operations. We give numerous examples to elaborate our results. We also explore the possibility of the same system acting as a catalyst for itself and find that indeed {\it self catalysis} is possible. Further, for the cases where no catalytic coherence transformation is possible we provide entanglement assisted coherence transformations and find the necessary and sufficient conditions for such transformations.
\end{abstract}
\maketitle

\section{Introduction}
Quantum resource theories \cite{Oppenheim13, FBrandao15} have been a corner stone to the development and quantitative understanding of various physical phenomena in quantum physics and quantum information theory. A resource theory comprises of two basic elements: one is the set of allowed (free) operations and other being the set of allowed (free) states. Any operation (or state) falling out of the set of free operations (or the set of free states) is then dubbed as a resource. The most prominent resource theory is the resource theory of entanglement \cite{HorodeckiRMP09}. The other notable examples include the resource theories of thermodynamics \cite{Fernando2013}, asymmetry \cite{Gour2008}, coherence \cite{Baumgratz2014, Marvian14} and steering \cite{Rodrigo2015}. The main advantages of having a resource theory for some phenomenon are the succinct understanding of various physical processes and operational quantification of the relevant resources. Therefore, a major concern of any resource theory is to describe and uncover the intricate structure of the physical processes (state transformations) within the set of allowed operations. The possibility of catalysis is one such phenomenon which allows the otherwise impossible state transformations via the set of allowed operations in a given resource theory. This is very natural as the additional systems (catalysts) are always available and importantly, in such transformations the additional resources are not consumed in anyway. The catalysis in quantum resource theories was first introduced in Refs. \cite{Jonathan1999, JonathanB1999} in the context of entanglement. The consideration of catalysts in the resource theory of thermodynamics turned out to be very surprising and extremely important that has led to the introduction of many second laws of quantum thermodynamics \cite{Brandao2015} compared to the single second law in the macroscopic thermodynamics \cite{Callen1985}. Recently, in the context of entanglement it is found that a quantum system can act as a catalyst for itself, allowing further the possibility of {\it self catalysis} \cite{Duarte2015}.

\begin{figure}
\label{fig1}
\includegraphics[width=50mm]{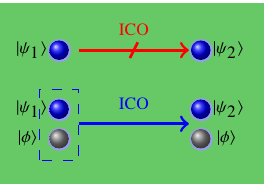}
\caption{The schematic for catalytic coherence transformations. Consider a finite dimensional quantum system in state $\ket{\psi_1}$. Let $\ket{\psi_2}$ be an incomparable state to $\ket{\psi_1}$, i.e., using only incoherent operations one cannot convert $\ket{\psi_1}$ into $\ket{\psi_2}$ with certainty. However, if one has temporary access to another coherent state $\ket{\phi}$, one can always achieve the transformation from $\ket{\psi_1}$ to $\ket{\psi_2}$. The state $\ket{\phi}$ is not consumed in any way and can, therefore, be viewed as a catalyst for this transformation.}
\label{fig}
\end{figure}

Catalysis in the resource theory of coherence was first considered in Ref. \cite{Aberg14} and is developed since then (see Refs. \cite{Du2015, DuB2015}, also see Fig. \ref{fig1}). In this work we further delineate the phenomenon of catalysis in the resource theory of coherence both in the deterministic and stochastic scenarios and completely characterize the coherence transformations amongst pure quantum states under incoherent operations. This is an important step towards a complete theory of quantum coherence based on incoherent operations as the allowed operations \cite{Baumgratz2014}. In particular, we obtain the necessary and sufficient conditions for the deterministic and stochastic coherence transformations in the presence of catalysts. We first find the necessary and sufficient conditions for the enhancement of the optimal probability of conversion while the catalysts are available. Then we go on to find the necessary and sufficient conditions for the deterministic catalytic coherence transformations and show that these are given by the simultaneous decrease of a family of R\'enyi entropies of the diagonal parts of the states under consideration in a fixed basis. This result is very similar in nature to the many second laws of quantum thermodynamics. We also provide a dedicated discussion on the practicality of these necessary and sufficient conditions. Further, for the cases where no catalytic coherence transformation is possible we consider the possibility of entanglement assisted coherence transformations and find the necessary and sufficient conditions for the same. Furthermore, we find that {\it self catalysis} in the context of coherence resource theory is possible for the transformations of certain states. We hope that our results will be useful for coherence transformations in the resource theory of coherence, in situations where processing of coherence is limited by additional restrictions from quantum thermodynamics and in context of {\it single-shot} information theory \cite{TomamichelPhD2012}.

The paper is organized as follows. We start with a discussion on interconversion of quantum states under incoherent operations, measures of coherence and coherence transformations for pure states along with some other preliminaries in Sec. \ref{prelims}. In Sec. \ref{sec:cat-coh-trans}, we discuss and obtain various results on catalytic coherence transformations under deterministic and stochastic scenarios. We present the necessary and sufficient conditions for catalytic coherence transformations in Sec. \ref{sec:nec-suff}. We then go beyond catalytic coherence transformations to entanglement assisted incoherent transformations in Sec. \ref{sec:b-cat-coh}. Finally, we conclude in Sec. \ref{sec:conclusion} with overview and implications of the results presented in the paper. The appendix lists some useful results that are obtained earlier by other researchers.

\section{Preliminaries}
\label{prelims}
\emph{Interconversion of quantum states, the set of allowed operations and measures of coherence.--} The notion of interconvertibility of quantum states is desirable in many situations. For example, if we have a reference quantum state that serves as a basic unit of certain resource such as coherence or entanglement (just like Kilogram serves as a basic unit for mass), then one would like to convert any other given state to the reference state and in this way one can estimate the amount of resource in a given state. The distillable entanglement \cite{Bennett1996, BennettB1996, Rains1999} of a bipartite quantum state is one of the measures of entanglement that is obtained via converting the state into the maximally entangled state. The other examples include the distillable coherence and coherence of formation \cite{Winter2015} (for single quantum systems), and entanglement of formation \cite{Bennettc1996, Wootters1998} (for bipartite quantum systems). We would like to emphasize here that the conversion from one state to the other is achieved by employing the relevant set of allowed operations. Currently, no common agreement persists for the definition of quantum coherence and we have two independent resource theories of coherence. One is based on the resource theory of asymmetry \cite{Gour2008, Marvian14} and turned out to be very successful in the thermodynamical contexts \cite{Rudolph214, Rudolph114}. The other one is based on the set of incoherent operations as the allowed operations \cite{Baumgratz2014}. We consider the latter resource theory of coherence throughout this work. In this resource theory of coherence, the set of incoherent operations is the allowed set of operations and any interconversion among quantum states is effected via operations from this set only. In quantum theory, a physically admissible operation $\Phi$ is a linear completely positive and trace-preserving map. Such a map $\Phi$ can be expressed by a set of Kraus operators $\{K_n\}_{n=1}^{N}$ such that $\Phi(\rho)=\sum_{n=1}^{N}K_n\rho K^\dag_n$, with $\sum_{n=1}^{N}K^\dag_nK_n=\mathbb{I}$ \cite{Nielsen10}. However, as mentioned, in the resource theory of coherence, the allowed operations are only the \emph{incoherent operations}.  An operation $\Phi_I$ is called an incoherent operation if the Kraus operators $\{K_n\}$ of $\Phi_I$ are such that $K_n\mathcal{I} K^{\dag}_n\subseteq \mathcal{I}$. Here $\mathcal{I}$ is the set of all incoherent states. Given a fixed reference basis, say $\{\ket{i}\}$, any state which is diagonal in the reference basis is called an incoherent state. It is to be noted that the notion of quantum coherence is basis dependent. The quantifiers of coherence in the resource theory of coherence based on the set of allowed operations as the incoherent operations have been shown to be operationally meaningful \cite{Alex15, UttamA2015, Winter2015} and hence establish the importance of this resource theory. The bona fide quantifiers of coherence include the $l_1$ norm of coherence, relative entropy of coherence \cite{Baumgratz2014} and R\'enyi entropies for certain range of R\'enyi index \cite{DuB2015}. For a pure state $\ket{\psi}$, the relative entropy of coherence $C_r(\ket{\psi})$ becomes the von Neumann entropy of its diagonal part in the fixed reference basis, i.e., $C_r(\ket{\psi}) = S (\psi^\md)$, where $S$ is the von Neumann entropy and $\psi^\md$ is the diagonal part of the state $\ket{\psi}$ in a fixed reference basis.

\emph{Catalysis.--} Just like the concept of catalysis in chemical reactions (conversion of a mixture of compounds into mixture of other compounds with the aid of a catalyst), there exists a similar concept in the context of interconversion of quantum states. Let us consider that we need a conversion of an initial state $\ket{\psi_1}$ into a final state $\ket{\psi_2}$ of a quantum system $\mathcal{H}$ by using only the restricted class of operations and assume further that this conversion is not possible. Now, if there exists a pure state $\ket{\phi}$ of the same system $\mathcal{H}$ or any other ancillary system $\mathcal{K}$ such that $\ket{\psi_1}\otimes\ket{\phi}$ can be transformed into $\ket{\psi_2}\otimes\ket{\phi}$ by using only the  restricted class of operations, then  such a transformation is called a catalytic transformation and $\ket{\phi}$ is called as a catalyst for the transformation $\ket{\psi_1}\rightarrow\ket{\psi_2}$. The state $\ket{\phi}$, just like a catalyst in a chemical process, does not change after the transformation (also see Fig. \ref{fig}). It is also possible that $n$ copies of same initial state $\ket{\psi_1}$ can act as a catalyst for the transformation $\ket{\psi_1}\rightarrow\ket{\psi_2}$, i.e., despite the impossibility of the transformation $\ket{\psi_1}\rightarrow\ket{\psi_2}$, the transformation $\ket{\psi_1}\otimes \ket{\psi_1}^{\otimes n}\rightarrow\ket{\psi_2}\otimes \ket{\psi_1}^{\otimes n}$ may be possible. Here, $n$ is a positive integer and depends on the transformation under consideration. This kind of catalysis is dubbed as {\it self catalysis} and we elaborate on it further as we go along. For the resource theory of coherence, we take incoherent operations for the restricted class of operations in the above definition, and the transformations then are referred to as catalytic coherence transformations.

\emph{Deterministic coherence transformations.--}
The possibility of transformation of a quantum system from one state with finite coherence to another state is determined by the majorization of the diagonal elements of the corresponding pure states in a fixed basis. This result was first proved in Ref. \cite{Du2015}. We state this result again for brevity:
\begin{thm}[\cite{Du2015}]
\label{th:DBG}
Let $\ket{\psi_1}$ and $\ket{\psi_2}$ be two pure states with $\psi^\md_1$ and $\psi^\md_2$ being the diagonal parts of $\ket{\psi_1}$ and $\ket{\psi_2}$, respectively in a fixed reference basis. Then a transformation from the state $\ket{\psi_1}$ to $\ket{\psi_2}$ is possible via incoherent operations {\it if and only if} $\psi^\md_1$ is majorized by $\psi^\md_2$, i.e., $\psi^\md_1 \prec\psi^\md_2$.
\end{thm}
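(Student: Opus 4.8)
The plan is to mirror Nielsen's majorization theorem for pure-state LOCC conversions, with incoherent operations replacing LOCC and the diagonals $\psi^\md_1,\psi^\md_2$ replacing Schmidt vectors. The structural fact that drives everything is that each column of an incoherent Kraus operator is supported on a single basis vector: $K$ incoherent means $K=\sum_i c_i\ket{f(i)}\!\bra{i}$ for scalars $c_i$ and a map $f$ of the index set. After harmless normalisations --- write $\ket{\psi_1}=\sum_i\sqrt{a_i}\,e^{i\theta_i}\ket{i}$ and $\ket{\psi_2}=\sum_i\sqrt{b_i}\,e^{i\phi_i}\ket{i}$, so that as probability vectors $\psi^\md_1=a$ and $\psi^\md_2=b$, and absorb the phases into diagonal, hence incoherent, unitaries applied before and after the operation --- I may assume both states have non-negative amplitudes.

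For the ``if'' direction, assume $a\prec b$. By the Hardy--Littlewood--P\'olya theorem there is a doubly stochastic $D$ with $a=Db$, and by Birkhoff's theorem $D=\sum_k p_k\Pi_k$ for permutation matrices $\Pi_k$ and weights $p_k>0$ summing to $1$; with $\pi_k$ the corresponding permutations this reads $a_i=\sum_k p_k\,b_{\pi_k^{-1}(i)}$. I then exhibit the Kraus operators
\begin{equation*}
K_k=\sqrt{p_k}\sum_{i:\,a_i>0}\sqrt{\frac{b_{\pi_k^{-1}(i)}}{a_i}}\;\ket{\pi_k^{-1}(i)}\!\bra{i},\qquad K_0=\sum_{i:\,a_i=0}\proj{i},
\end{equation*}
each of which has every column proportional to a basis vector, hence is incoherent. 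A one-line computation gives $K_k\ket{\psi_1}=\sqrt{p_k}\ket{\psi_2}$ and $K_0\ket{\psi_1}=0$, so $\Phi_I(\cdot):=\sum_{k\ge0}K_k(\cdot)K_k^\dagger$ sends $\proj{\psi_1}$ to $\sum_k p_k\proj{\psi_2}=\proj{\psi_2}$; and since each $K_k^\dagger K_k$ is diagonal with entries $p_k\,b_{\pi_k^{-1}(i)}/a_i$, the completeness relation $\sum_{k\ge0}K_k^\dagger K_k=\mathbb{I}$ is precisely the identity $a=Db$. Hence $\Phi_I$ is a legitimate incoherent operation realising $\ket{\psi_1}\to\ket{\psi_2}$.

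For the ``only if'' direction, suppose $\Phi_I(\proj{\psi_1})=\proj{\psi_2}$ with $\Phi_I(\cdot)=\sum_n K_n(\cdot)K_n^\dagger$, each $K_n$ incoherent. Since $\proj{\psi_2}$ has rank one and $\sum_n K_n\proj{\psi_1}K_n^\dagger$ is a sum of positive operators, every term is proportional to $\proj{\psi_2}$, so $K_n\ket{\psi_1}=\mu_n\ket{\psi_2}$ with $\sum_n|\mu_n|^2=1$ by trace preservation. I would then pass to the maximally correlated bipartite states $\ket{\Psi_1}=\sum_i\sqrt{a_i}\ket{i}\ket{i}$ and $\ket{\Psi_2}=\sum_i\sqrt{b_i}\ket{i}\ket{i}$ on $\mathcal{H}\otimes\mathcal{H}$, whose Schmidt vectors are $a$ and $b$. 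Factoring $K_n=W_nD_n$ with $D_n=\mathrm{diag}(c_{in})_i$ and $W_n=\sum_i\ket{f_n(i)}\!\bra{i}$, completeness gives $\sum_nD_n^\dagger D_n=\mathbb{I}$, so $\{D_n\}$ is an incoherent measurement; applying it to the first factor of $\ket{\Psi_1}$ leaves branch $n$ in the state $\propto\sum_i\sqrt{a_i}\,c_{in}\ket{i}\ket{i}$, with Schmidt vector $(a_i|c_{in}|^2)_i$. When $f_n$ is injective on the support of $\ket{\psi_1}$, the relation $K_n\ket{\psi_1}=\mu_n\ket{\psi_2}$ says exactly that this Schmidt vector, normalised, is a relabelling of $b$, so applying the isometry $W_n$ on both factors turns branch $n$ into $\ket{\Psi_2}$; hence $\ket{\Psi_1}\to\ket{\Psi_2}$ by LOCC, and Nielsen's theorem forces $a\prec b$, i.e.\ $\psi^\md_1\prec\psi^\md_2$. (Alternatively, necessity drops out of the Vidal-type formula for the optimal probability of stochastic incoherent conversion, deterministic convertibility forcing each relevant ratio of tail sums of $a^\downarrow$ and $b^\downarrow$ to be at least $1$, which is again $a\prec b$.)

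The crux is the ``only if'' direction. It is tempting to argue that $\Phi_I$ induces a doubly stochastic map on diagonals, but incoherent operations need not be unital, and an incoherent Kraus operator can collapse several basis states onto one while letting their amplitudes interfere, so its action on diagonals is at best column-stochastic --- and a column-stochastic relation implies majorization in neither direction. What rescues the argument is precisely the purity of the output, which pins $K_n\ket{\psi_1}$ to a scalar multiple of $\ket{\psi_2}$ for every $n$; converting that constraint into the full family of majorization inequalities --- through the maximally correlated reduction, or by a Vidal-type analysis of the $\mu_n$ and the fibres $f_n^{-1}(k)$, which is also the only place where non-injective Kraus operators demand extra bookkeeping --- is the technical heart of the proof, exactly as the purity of the target state is what powers Nielsen's and Vidal's theorems in the entanglement setting.
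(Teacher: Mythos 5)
First, a point of orientation: the paper does not prove Theorem \ref{th:DBG} at all --- it imports it from Ref.~\cite{Du2015} and then spends a paragraph warning that the published proof of the ``only if'' direction is incomplete, precisely because it assumed every incoherent Kraus operator has the same column-support pattern (the row index $i(j)$ in Eq.~\eqref{eq:Kn} was treated as independent of $n$), an assumption violated by the paper's explicit example $K_1,\dots,K_8$; the repair is attributed to Ref.~\cite{DuB2015}. Measured against that, your ``if'' direction is correct and complete: the Hardy--Littlewood--P\'olya/Birkhoff decomposition $a=\sum_k p_k\Pi_k b$ and the Kraus operators $K_k=\sqrt{p_k}\sum_{i:a_i>0}\sqrt{b_{\pi_k^{-1}(i)}/a_i}\,\ket{\pi_k^{-1}(i)}\!\bra{i}$ do satisfy $K_k\ket{\psi_1}=\sqrt{p_k}\ket{\psi_2}$ and $\sum_{k\ge 0} K_k^\dagger K_k=\mathbb{I}$, each $K_k$ is incoherent because every column is supported on a single basis vector, and your handling of vanishing amplitudes is sound since $a_i=0$ forces $b_{\pi_k^{-1}(i)}=0$ for every $k$ with $p_k>0$. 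This is the standard Nielsen-style construction and is fine.

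The ``only if'' direction, however, contains a genuine gap, and it sits exactly where the paper says the original proof of Ref.~\cite{Du2015} broke. From purity of the output you correctly deduce $K_n\ket{\psi_1}=\mu_n\ket{\psi_2}$ with $K_n=\sum_i c_{in}\ket{f_n(i)}\!\bra{i}$, but your reduction to Nielsen's theorem --- measure the diagonal parts $D_n$ on one factor of $\ket{\Psi_1}=\sum_i\sqrt{a_i}\ket{i}\ket{i}$ and undo with $W_n$ on both factors --- is only carried out ``when $f_n$ is injective on the support of $\ket{\psi_1}$.'' When $f_n$ is many-to-one (as for the paper's $K_1$, all three of whose columns are supported on $\ket{0}$), $W_n\otimes W_n$ is not a local isometry, the branch Schmidt vector $(a_i|c_{in}|^2)_i$ is generically \emph{not} a relabelling of $b$, and the constraint $\sum_{i\in f_n^{-1}(j)}c_{in}\sqrt{a_i}=\mu_n\sqrt{b_j}$ permits coherent interference of several amplitudes into a single output amplitude --- which is precisely the mechanism one must show cannot beat majorization. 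Your closing sentence concedes that non-injective $K_n$ ``demand extra bookkeeping,'' but that bookkeeping \emph{is} the theorem: it is the entire content of the correction in Ref.~\cite{DuB2015}. The parenthetical fallback via the optimal-probability formula, Eq.~\eqref{eq:opt-prob}, does yield $a\prec b$ from $P=1$, but that formula is itself a result of Ref.~\cite{DuB2015} whose proof must confront the same non-injective Kraus operators, so invoking it is a citation rather than an independent argument. As written, the necessity direction is not established.
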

For two probability vectors $p=\{p_i\}$ and $q=\{q_i\}$ $(i=1,\ldots,d)$ arranged in decreasing order, $p$ is said to be majorized by $q$, i.e., $p\prec q$ if $\sum_{i=1}^{l}p_i\leq \sum_{i=1}^{l}q_i$ for $l=1,\ldots,d-1$ and $\sum_{i=1}^{d}p_i=1= \sum_{i=1}^{d}q_i$.
%
Theorem \ref{th:DBG} is the key ingredient for discussing the incoherent transformations between two pure states and the problem at the hand. However, we find that the proof of the converse part of the Theorem \ref{th:DBG}, given in Ref. \cite{Du2015}, is true \emph{only} for a specific class of incoherent operations.  In the original proof, it was claimed that if a pure state $\ket{\psi}$ can be transformed to another pure state $\ket{\phi}$ through an incoherent channel then the elements $K_n$ of the channel can always be written as \cite{Du2015}
\begin{equation}
\label{eq:Kn}
K_n=P_{\pi_n} \Br{\begin{array}{ccc}
a^{(n)}_1 & \delta_{1,i(2)}a^{(n)}_2 & \delta_{1,i(3)} a^{(n)}_3  \\
0 & \delta_{2,i(2)}a^{(n)}_2 & \delta_{1, i(3)} a^{(n)}_3\\
0 & 0 & \delta_{3,i(3)}a^{(n)}_3
\end{array}},
\end{equation}
where $\delta_{ij}$ is the Kronecker delta function, $a^{(n)}_j$ $(j=1, 2, 3)$ is the nonzero entry of $K_n$ in the $j^{\mathrm{th}}$ column, $i(j)$ is the location of $a^{(n)}_j$ in the $i^{\mathrm{th}}$ row and was treated independent of $n$, and $P_{\pi_n}$ is the permutation matrix. This means that the channel elements $K_n$ considered in Ref. \cite{Du2015} were of the same kind (upper triangular matrices) up to permutations. However, this is not the case always. For example let the initial state be $\ket{\psi}=\sum^{2}_{i=0}\sqrt{1/3}\ket{i}$, the final state be $\ket{\phi}=\ket{0}$, and define an incoherent operation $\Phi=\{K_n\}_{n=1}^{8}$ with the Kraus elements $K_n$ being given by
\begin{align}
&K_1=\Br{\begin{array}{ccc}
\frac{1}{2\sqrt{2}} & \frac{1}{2\sqrt{2}} & \frac{1}{2\sqrt{2}}  \\
0 & 0 & 0\\
0 & 0 & 0
\end{array}}=K_2,\nonumber\\
&K_3=\Br{\begin{array}{ccc}
-\frac{1}{2\sqrt{2}} & \frac{1}{2\sqrt{2}} & \frac{1}{2\sqrt{2}}  \\
0 & 0 & 0\\
0 & 0 & 0
\end{array}}=K_4,\nonumber\\
&K_5=\Br{\begin{array}{ccc}
\frac{1}{2\sqrt{2}} & 0 & 0  \\
0 & \frac{1}{2\sqrt{2}} & -\frac{1}{2\sqrt{2}}\\
0 & 0 & 0
\end{array}}=K_6=K_7=K_8.\nonumber
\end{align}
It is easy to see $K_n\ket{\psi}=\alpha_n\ket{\phi}$ (for some $\alpha_n$ such that $\sum_n|\alpha_n|^2=1$), $\sum^8_{n=1}K^{\dag}_nK_n=I$ and $\Phi(\proj{\psi})=\sum^8_{n=1}K_n\proj{\psi}K^{\dag}_n=\proj{\phi}$. But, obviously, $K_1$ and $K_5$ are not related to each other via permutations and therefore, these are different kinds of upper triangular matrices. This means that $i(j)$ that was considered independent of $n$ must depend on $n$, in general.
This discrepancy has already been noticed in Ref. \cite{DuB2015} and amended by considering $i(j)$ that explicitly depend on $n$.

It is important to note that for coherence transformations of pure states via incoherent operations, without loss of generality, we can always assume that the coefficients of the pure states in a fixed reference basis are all real, positive and arranged in the decreasing order \cite{Du2015}. Throughout the paper we take this for granted and mention this at the places where we think it is necessary.

\emph{Stochastic coherence transformations.--} We discussed above the necessary and sufficient conditions for the successful transformation of an initial state $\ket{\psi_1}$ into a final state $\ket{\psi_2}$ by incoherent operations under the name of deterministic coherence transformation (as the probability to achieve the transformation is $1$).
For dimensions strictly greater than two, because the majorization is only a partial order, Theorem \ref{th:DBG} leaves us with the possibility that there can be a pair of states $\ket{\psi_1}$ and $\ket{\psi_2}$ such that neither $\psi^\md_1 \prec\psi^\md_2$ nor $\psi^\md_2 \prec\psi^\md_1$. These states will be called incomparable states in terms of coherence properties.
To deal with these incomparable states, the stochastic transformations have been investigated in Refs. \cite{Vidal1999, Feng2004, Feng2005, DuB2015}. Here, a stochastic coherence transformation means that for an incoherent operation $\Phi_I$ with Kraus operators $\set{K_n}_{n=1}^{N}$, although $\Phi_I$ cannot transform $\ket{\psi_1}$ into $\ket{\psi_2}$, i.e., $\ket{\psi_2}\bra{\psi_2}\neq\Phi_I[\ket{\psi_1}\bra{\psi_1}]$, there may exist a subset $\set{K_i}_{i=1}^{N'}$ of set $\set{K_n}_{n=1}^{N}$ ($N'<N$) such that $K_i\ket{\psi_1}\propto \ket{\psi_2}$. The maximum value of probability of transforming the initial state into the finial state, i.e., $\bra{\psi_1}\left(\sum_{i=1}^{N'}K_i^\dag K_i\right)\ket{\psi_1}$, has been calculated in \cite{DuB2015}. Here $\sum_{i=1}^{N'}K_i^\dag K_i\neq \mathbb{I}$. Thus, deterministic coherence transformations are the stochastic coherence transformations with optimal probability of transformation being equal to $1$.

\section{Deterministic and stochastic catalytic coherence transformations}
\label{sec:cat-coh-trans}
\emph{Catalysis under deterministic incoherent operations.--}
We know that there exists pair of incomparable quantum states such that any one of them cannot be transformed to another only using incoherent operations. Such examples can be constructed very easily. Let us consider a qutrit system with the states $\ket{\psi_1} = \sum_{i=0}^{2} \sqrt{\psi_1^i} \ket{i}$ and $\ket{\psi_2} = \sum_{i=0}^{2} \sqrt{\psi_2^i} \ket{i}$. Choose $\psi_1^i$ and $\psi_2^i$ such that $|\psi_1^0|\leq |\psi_2^0|$ and $|\psi_1^0| + |\psi_1^1| > |\psi_2^0| + |\psi_2^1|$. The diagonal parts of such states will never be majorized by one another. The specific examples are given in Table \ref{tab:ex}. Let us consider $d$-dimensional incomparable states $\ket{\psi_1} = \sum_{i=0}^{d-1} \sqrt{\psi_1^i} \ket{i}$ and $\ket{\psi_2} = \sum_{i=0}^{d-1} \sqrt{\psi_2^i} \ket{i}$. Despite the impossibility of transformation from $\ket{\psi_1}$ to $\ket{\psi_2}$ via incoherent operations, it is known that another auxiliary system with coherence (catalyst) can be used to make this transformation possible \cite{Du2015, DuB2015}  (via catalytic coherence transformations (see Fig. \ref{fig})).
There are following general properties of catalytic coherence transformations \cite{Du2015}: (a) No incoherent transformation can be catalyzed by a maximally coherent state $\ket{\psi_M} = \frac{1}{\sqrt{d}}\sum_{i=0}^{d-1}\ket{i}$. (b) Two states are interconvertible, i.e., $\ket{\psi_1}\rightleftharpoons\ket{\psi_2}$, under catalytic coherence transformations {\it if and only if} they are equivalent up to a permutation of diagonal unitary transformations. (c) $\ket{\psi_1}\rightarrow \ket{\psi_2}$ under catalytic coherence transformations only if both $|\psi_1^0|\leq |\psi_2^0|$ and $|\psi_1^{d-1}|\geq |\psi_2^{d-1}|$ hold. However, the necessary and sufficient conditions are hitherto missing for catalytic coherence transformations. We provide these conditions for stochastic coherence transformations later in this section and for the deterministic coherence transformations for pure quantum states in the next section.

But before going any further, let us consider a specific example of a qutrit system in the state $\ket{\psi_1} = \sqrt{0.4}\ket{0}+\sqrt{0.4}\ket{1}+\sqrt{0.1}\ket{2}+\sqrt{0.1}\ket{3}$. We want to make the otherwise impossible transformation from $\ket{\psi_1}$ to $\ket{\psi_2} = \sqrt{0.5}\ket{0}+\sqrt{0.25}\ket{1}+\sqrt{0.25}\ket{2}$ via incoherent operations using a catalyst in state $\ket{\phi}$. It can be seen that we can choose $\ket{\phi} = \sqrt{0.6}\ket{0} + \sqrt{0.4}\ket{1}$. In this case, we have $\ket{\psi_1}\otimes \ket{\phi}\xrightarrow{\mathrm{ICO}}\ket{\psi_2}\otimes \ket{\phi}$. Here ICO denotes the incoherent operation. It is important to note that the state $\ket{\phi}$ is not unique. For example in the above case $\ket{\phi} = \sqrt{0.62}\ket{0} + \sqrt{0.38}\ket{1}$ can also act as a catalyst. So it is a legitimate question to ask that what is the structure of the set of catalysts for a given catalytic transformation, i.e., for fixed $\ket{\psi_1}$ and $\ket{\psi_2}$, what is the set $\{\ket{\phi}\}$ such that $\ket{\psi_1}\otimes \ket{\phi}\xrightarrow{\mathrm{ICO}}\ket{\psi_2}\otimes \ket{\phi}$? The following proposition answers this question for four dimensional systems.
\begin{prop}\label{prop:str_of_cat}
Consider a four dimensional system with states $\ket{\psi_1}=\sum^4_{i=1}\sqrt{\psi_1^i}\ket{i}$ and $\ket{\psi_2}=\sum^4_{i=1}\sqrt{\psi_2^i}\ket{i}$ such that $\ket{\psi_1}\nrightarrow \ket{\psi_2}$ under incoherent operations. Without loss of generality we can assume that the coefficients $\{\psi_1^i\}$, $\{\psi_2^i\}$ are real and arranged in decreasing order. The necessary and sufficient conditions for the existence of a catalyst $\ket{\phi}=\sqrt{a}\ket{1}+\sqrt{1-a}\ket{2}$ $(a\in(0.5,1))$ for these states are: $\psi_1^1\leq \psi_2^1;~\psi_1^1+\psi_1^2>\psi_2^1+\psi_2^2;~\psi_1^1+\psi_1^2+\psi_1^3\leq\psi_2^1+\psi_2^2+\psi_2^3$, and
\begin{align}
&\max\left\{\frac{\psi_1^1+\psi_1^2-\psi_2^1}{\psi_2^2+\psi_2^3},1-\frac{\psi_1^4-\psi_2^4}{\psi_2^3-\psi_1^3}\right\}\nonumber\\
&\leq a\leq \min\left\{\frac{\psi_2^1}{\psi_1^1+\psi_1^2},\frac{\psi_2^1-\psi_1^1}{\psi_1^2-\psi_2^2},1-\frac{\psi_2^4}{\psi_1^3+\psi_1^4}\right\}.
\end{align}
\end{prop}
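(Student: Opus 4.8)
The plan is to convert ``$\ket{\phi}$ is a catalyst'' into a majorization condition and then unwind it by hand. By Theorem~\ref{th:DBG} applied to the composite system with its product incoherent basis, $\ket{\psi_1}\otimes\ket{\phi}\xrightarrow{\mathrm{ICO}}\ket{\psi_2}\otimes\ket{\phi}$ holds if and only if $\psi^\md_1\otimes\phi^\md\prec\psi^\md_2\otimes\phi^\md$, with $\phi^\md=(a,1-a)$. Write $p=(\psi_1^1,\psi_1^2,\psi_1^3,\psi_1^4)$, $q=(\psi_2^1,\psi_2^2,\psi_2^3,\psi_2^4)$ (non-increasing, $\psi_2^4\ge 0$), and $P_l=\sum_{i\le l}p_i$, $Q_l=\sum_{i\le l}q_i$. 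Then the two vectors to compare are the decreasing rearrangements of the eight numbers $\{ap_i,(1-a)p_i\}$ and $\{aq_i,(1-a)q_i\}$, both with unit sum, and the catalyst works exactly when the first is dominated by the second in each of the seven proper partial sums.

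The first step is to extract the three conditions on $P_l,Q_l$ as necessary. Since $a>1/2$, the largest of $\{ap_i,(1-a)p_i\}$ is $ap_1$ and that of $\{aq_i,(1-a)q_i\}$ is $aq_1$, so the $k=1$ inequality reads $\psi_1^1\le\psi_2^1$, i.e.\ $P_1\le Q_1$. The smallest entries are $(1-a)p_4$ and $(1-a)q_4$, so the $k=7$ inequality reads $\psi_1^4\ge\psi_2^4$ (property~(c) above), which, since $P_4=Q_4=1$, is exactly $\psi_1^1+\psi_1^2+\psi_1^3\le\psi_2^1+\psi_2^2+\psi_2^3$. Finally, $\ket{\psi_1}\nrightarrow\ket{\psi_2}$ means $p\not\prec q$; as $P_1\le Q_1$, $P_3\le Q_3$, $P_4=Q_4$ already hold, the only possible violation is $P_2>Q_2$, i.e.\ $\psi_1^1+\psi_1^2>\psi_2^1+\psi_2^2$.

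The remaining five partial sums ($k=2,3,4,5,6$) produce the bounds on $a$, and here a case split is needed because the two decreasing rearrangements depend on where the ratio $a/(1-a)$ falls among the consecutive ratios $\psi_1^i/\psi_1^{i+1}$ and $\psi_2^i/\psi_2^{i+1}$; only finitely many orderings occur. In each regime one writes the five inequalities explicitly: each one either is automatically true there (given the three side conditions and the monotonicity of $p,q$) or simplifies to one linear inequality in $a$. For instance, when $\{ap_i\},\{(1-a)p_i\}$ interleave at the top as $ap_1,ap_2,(1-a)p_1,(1-a)p_2,\dots$ and $\{aq_i\},\{(1-a)q_i\}$ as $aq_1,(1-a)q_1,aq_2,aq_3,\dots$, the $k=2$ sum gives $a(\psi_1^1+\psi_1^2)\le\psi_2^1$, i.e.\ $a\le\psi_2^1/(\psi_1^1+\psi_1^2)$; the $k=3$ sum gives $a\le(\psi_2^1-\psi_1^1)/(\psi_1^2-\psi_2^2)$; the $k=4$ sum gives $\psi_1^1+\psi_1^2\le\psi_2^1+a(\psi_2^2+\psi_2^3)$, i.e.\ $a\ge(\psi_1^1+\psi_1^2-\psi_2^1)/(\psi_2^2+\psi_2^3)$; and (in the regime where both vectors are ``not spread'' at the bottom) $k=5$ gives the bound involving $\psi_1^4-\psi_2^4$ and $\psi_2^3-\psi_1^3$, while (in the regime where $p$ is ``spread'' and $q$ ``not spread'' at the bottom) $k=6$ gives $a\le 1-\psi_2^4/(\psi_1^3+\psi_1^4)$. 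Running through every regime and discarding the constraints subsumed by others valid there, the surviving constraints assemble into $\max\{\cdots\}\le a\le\min\{\cdots\}$; one also checks that this interval always lies inside the $a$-range on which the assumed ordering is the correct one, so no valid catalyst is missed. Conversely, given the three side conditions and $a$ in the displayed interval, one reads off the relevant regime from $a$, substitutes, and verifies all seven partial sums, giving sufficiency.

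The main obstacle is the bookkeeping in this last step: enumerating the admissible decreasing rearrangements of the two eight-term tensor products, tracking the $a$-interval on which each is valid, handling ties and degeneracies (equal coefficients within $\psi_1$ or within $\psi_2$, the case $\psi_2^4=0$, coincidences between the $a$-branch and the $(1-a)$-branch), and confirming that the non-generic regimes contribute nothing beyond the five displayed bounds and three partial-sum conditions. The worked example in the text---$p=(0.4,0.4,0.1,0.1)$, $q=(0.5,0.25,0.25,0)$, for which the conditions give $0.6\le a\le 0.625$ with the $k=4$ inequality tight at $a=0.6$---is a convenient consistency check on which partial sum yields which bound.
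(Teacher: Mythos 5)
Your first step coincides with the paper's: both reduce the existence of the catalyst to the majorization relation $\psi^\md_1\otimes\phi^\md\prec\psi^\md_2\otimes\phi^\md$ via Theorem~\ref{th:DBG}. After that the routes diverge. The paper does no further computation: it observes that this relation is identical to the majorization of reduced states of Schmidt-correlated vectors $\ket{\gamma_j}=\sum_i\sqrt{\psi_j^i}\ket{ii}$ with catalyst $\ket{\eta}=\sqrt{a}\ket{11}+\sqrt{1-a}\ket{22}$, and then simply invokes Theorem~\ref{th:Sun2005} (Sun et al.), which states exactly the claimed conditions in the LOCC setting. You instead set out to re-derive the content of that cited theorem from scratch, by enumerating the decreasing rearrangements of the two eight-component product vectors and reading off the seven partial-sum inequalities regime by regime. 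What you have written is correct as far as it goes: the $k=1$ and $k=7$ extractions, the observation that $p\not\prec q$ then forces $P_2>Q_2$, and the sample computations yielding $a\le\psi_2^1/(\psi_1^1+\psi_1^2)$, $a\le(\psi_2^1-\psi_1^1)/(\psi_1^2-\psi_2^2)$, $a\ge(\psi_1^1+\psi_1^2-\psi_2^1)/(\psi_2^2+\psi_2^3)$, and the two bottom-end bounds all check out, as does the consistency test on the worked example ($0.6\le a\le 0.625$). Your approach has the merit of being self-contained and of making visible which partial sum produces which bound, which the paper's citation hides; the paper's approach has the merit of being one line.

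The gap is the one you yourself flag: the exhaustive case analysis over all admissible orderings (including ties, the $\psi_2^4=0$ case, and the verification that no regime contributes a constraint beyond the five displayed bounds, together with the converse direction) is only described, not carried out. That enumeration is precisely the content of the cited result of Sun et al.\ and is where essentially all of the work in this proposition lives; until it is completed, your argument establishes the necessity of the three partial-sum conditions and the correctness of the individual bounds in the regimes you exhibit, but not the full equivalence. Either complete the enumeration or, as the paper does, reduce to the known entanglement-catalysis statement and cite it.
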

\begin{mproof}
For $\ket{\psi_1}=\sum^4_{i=1}\sqrt{\psi_1^i}\ket{i}$, $\ket{\psi_2}=\sum^4_{i=1}\sqrt{\psi_2^i}\ket{i}$ and $\ket{\phi}=\sqrt{a}\ket{1}+\sqrt{1-a}\ket{2}$, we can define $\ket{\gamma_1}_{AB}=\sum^4_{i=1}\sqrt{\psi_1^i}\ket{i}\ket{i}$, $\ket{\gamma_2}_{AB}=\sum^4_{i=1}\sqrt{\psi_2^i}\ket{i}\ket{i}$ and $\ket{\eta}_{AB}=\sqrt{a}\ket{11}+\sqrt{1-a}\ket{22}$. Then $\psi^{(d)}_1\otimes \phi^{(d)}\prec\psi^{(d)}_2\otimes \phi^{(d)}$ is equivalent to $\mathrm{Tr}_A(\proj{\gamma_1}\otimes \proj{\eta})\prec \mathrm{Tr}_A(\proj{\gamma_2}\otimes \proj{\eta})$. Now the proof of our proposition follows from the Theorem \ref{th:Sun2005} of appendix which was proved in Ref. \cite{Sun2005}.
\end{mproof}
It may be noted that based on the connections between the resource theories of coherence and entanglement, the results of the catalytic transformations in entanglement theory can always  be carried over to the coherence theory.
\begin{table}
\centering
\caption{\label{tab:ex} Some examples of incomparable (via incoherent operations) coherent states in the computational basis.}
\begin{tabular}{|l|l|}
\hline
$\ket{\psi_1}$ & $\ket{\psi_2}$ \\ \hline
$\sqrt{0.5}\ket{0}+\sqrt{0.4}\ket{1}+\sqrt{0.1}\ket{2}$ & $\sqrt{0.6}\ket{0}+\sqrt{0.2}\ket{1}+\sqrt{0.2}\ket{2}$ \\ \hline
$\sqrt{0.5}\ket{0}+\sqrt{0.4}\ket{1}+\sqrt{0.1}\ket{2}$ & $\sqrt{0.6}\ket{0}+\sqrt{0.25}\ket{1}+\sqrt{0.15}\ket{2}$ \\ \hline
$\sqrt{0.5}\ket{0}+\sqrt{0.4}\ket{1}+\sqrt{0.1}\ket{2}$ & $\sqrt{0.7}\ket{0}+\sqrt{0.15}\ket{1}+\sqrt{0.15}\ket{2}$ \\ \hline
$\sqrt{0.4}\ket{0}+\sqrt{0.4}\ket{1}+\sqrt{0.2}\ket{2}$ & $\sqrt{0.45}\ket{0}+\sqrt{0.3}\ket{1}+\sqrt{0.25}\ket{2}$ \\ \hline
$\sqrt{0.4}\ket{0}+\sqrt{0.4}\ket{1}+\sqrt{0.2}\ket{2}$ & $\sqrt{0.5}\ket{0}+\sqrt{0.25}\ket{1}+\sqrt{0.25}\ket{2}$ \\ \hline
\end{tabular}
\end{table}
Also, it is noted that if for the states $\ket{\psi_1} = \sum_{i=0}^{d-1} \sqrt{\psi_1^i} \ket{i}$ and $\ket{\psi_2} = \sum_{i=0}^{d-1} \sqrt{\psi_2^i} \ket{i}$, $\ket{\phi}$ is a catalyst then $\ket{\phi}$ acts as a catalyst for the states $\ket{\psi_1} = \sum_{i=0}^{d} \sqrt{\tilde{\psi}_1^i} \ket{i}$ and $\ket{\psi_2} = \sum_{i=0}^{d} \sqrt{\tilde{\psi}_2^i} \ket{i}$, where $\tilde{\psi}_k^i = \psi_k^i$ for $k=1,2$ and $i=1,\cdots, d-2$. $\tilde{\psi}_k^{d-1} = \psi_k^{d-1}-\epsilon_k$ and $\tilde{\psi}_k^{d} =\epsilon_k$ for $k=1,2$. For example, for the states $\ket{\psi_1} = \sqrt{0.4}\ket{0}+\sqrt{0.4}\ket{1}+\sqrt{0.1}\ket{2}+\sqrt{0.1}\ket{3}$ and $\ket{\psi_2} = \sqrt{0.5}\ket{0}+\sqrt{0.25}\ket{1}+\sqrt{0.25}\ket{2}$ the catalyst is $\ket{\phi} = \sqrt{0.6}\ket{0} + \sqrt{0.4}\ket{1}$. Now for the states $\ket{\psi_1} = \sqrt{0.4}\ket{0}+\sqrt{0.4}\ket{1}+\sqrt{0.1}\ket{2}+\sqrt{0.05}\ket{3}+\sqrt{0.05}\ket{4}$ and  $\ket{\psi_2} = \sqrt{0.5}\ket{0}+\sqrt{0.25}\ket{1}+\sqrt{0.25}\ket{2}$ the catalyst can again be chosen as $\ket{\phi} = \sqrt{0.6}\ket{0} + \sqrt{0.4}\ket{1}$. Moreover, if $\ket{\psi_1}\nrightarrow\ket{\psi_2}$, it is possible that $\ket{\psi_1}\otimes \ket{\psi_1}^{\otimes N}\rightarrow\ket{\psi_2}\otimes \ket{\psi_1}^{\otimes N}$ for some $N\geq1$. This means that a state can act as catalyst for itself. For example, take states $\ket{\psi_1} = \sqrt{0.9}\ket{0}+\sqrt{0.081}\ket{1}+\sqrt{0.01}\ket{2}+\sqrt{0.009}\ket{3}$ and  $\ket{\psi_2} = \sqrt{0.95}\ket{0}+\sqrt{0.03}\ket{1}+\sqrt{0.02}\ket{2}$. $\ket{\psi_1}$ acts as a catalyst here, i.e., $\ket{\psi_1}\nrightarrow\ket{\psi_2}$ but $\ket{\psi_1}\otimes \ket{\psi_1}\rightarrow\ket{\psi_2}\otimes \ket{\psi_1}$. Similarly, if we take $\ket{\psi_1} = \sqrt{0.9}\ket{0}+\sqrt{0.088}\ket{1}+\sqrt{0.006}\ket{2}+\sqrt{0.006}\ket{3}$ and $\ket{\psi_2} = \sqrt{0.95}\ket{0}+\sqrt{0.03}\ket{1}+\sqrt{0.02}\ket{2}$ the two copies of $\ket{\psi_1}$ act as a catalyst, i.e., $\ket{\psi_1}\nrightarrow\ket{\psi_2}$ but $\ket{\psi_1}\otimes \ket{\psi_1}^{\otimes 2}\rightarrow\ket{\psi_2}\otimes \ket{\psi_1}^{\otimes 2}$. These examples are taken from Ref. \cite{Duarte2015} which deals with {\it self catalysis} in entanglement theory.

\smallskip
\noindent
\emph{Catalysis under stochastic incoherent operations.--} Here we explore the possibility of transforming a pure state to another incomparable pure state using stochastic incoherent operations as we already know that for such a pair of states there doest not exist any deterministic incoherent operation that can facilitate this transformation. We consider the transformations both in presence and in absence of catalysts. It is obtained in Ref. \cite{DuB2015} that for a pure state $\ket{\psi}=\sum_{i=0}^{d-1}\sqrt{\psi_i}\ket{i}$, the functions $C_l(\psi) = \sum_{i=l}^{d-1}|\psi_i|$, $l=0,\cdots,d-1$ are valid coherence measures in the sense of the resource theory of coherence \cite{Baumgratz2014}. Moreover, in the absence of catalysts, the optimal probability $P\left(\ket{\psi_1}\xrightarrow{\mathrm{ICO}}\ket{\psi_2}\right)$ of converting a pure state $\ket{\psi_1}$ into $\ket{\psi_2}$ is given by \cite{DuB2015}
\begin{align}
\label{eq:opt-prob}
 P\left(\ket{\psi_1}\xrightarrow{\mathrm{ICO}}\ket{\psi_2}\right) = \min_{l\in[0,d-1]}\frac{C_l(\psi_1)}{C_l(\psi_2)}.
\end{align}

We first prove that the optimal probability of deterministic incoherent state transformations is always one. Consider a pair of pure states $\ket{\psi_1} = \sum_{i=0}^{d-1} \sqrt{\psi_1^i} \ket{i}$ and $\ket{\psi_2} = \sum_{i=0}^{d-1} \sqrt{\psi_2^i} \ket{i}$ such that $\ket{\psi_1}\xrightarrow{\mathrm{ICO}}\ket{\psi_2}$. From Theorem \ref{th:DBG}, if $\ket{\psi_1}\xrightarrow{\mathrm{ICO}}\ket{\psi_2}$, then $(\psi^0_1,\ldots,\psi^{d-1}_1)\prec(\psi^0_2,\ldots,\psi^{d-1}_2)$. Thus
$\sum^l_{i=0}\psi^i_1\leq\sum^l_{i=0}\psi^i_2$. Due to normalization, we have
\begin{eqnarray*}
\sum^{d-1}_{i=l}\psi^i_1\geq\sum^{d-1}_{i=l}\psi^i_2.
\end{eqnarray*}
That is $C_l(\psi_1)\geq C_l(\psi_2)$ for all $l$ values. Hence, $P(\ket{\psi_1}\xrightarrow{\mathrm{ICO}}\ket{\psi_2})=1$.
We note that mere the presence of another quantum system (a catalyst) can enhance the optimal probability of transition given by Eq. \eqref{eq:opt-prob}. For example consider the states $\ket{\psi_1} = \sqrt{0.4}\ket{0}+\sqrt{0.4}\ket{1}+\sqrt{0.2}\ket{2}$ and $\ket{\psi_2}= \sqrt{0.5}\ket{0}+\sqrt{0.25}\ket{1}+\sqrt{0.25}\ket{2}$. Here $P\left(\ket{\psi_1}\xrightarrow{\mathrm{ICO}}\ket{\psi_2}\right) = 0.8$ and $P\left(\ket{\psi_2}\xrightarrow{\mathrm{ICO}}\ket{\psi_1}\right) =0.83$. Now consider another state $\ket{\phi} = \sqrt{0.6} \ket{0}+\sqrt{0.4}\ket{1}$. We have $P\left(\ket{\psi_1}\otimes\ket{\phi}\xrightarrow{\mathrm{ICO}}\ket{\psi_2}\otimes\ket{\phi}\right) = 0.8$ and $P\left(\ket{\psi_2}\otimes\ket{\phi}\xrightarrow{\mathrm{ICO}}\ket{\psi_1}\otimes\ket{\phi}\right) =0.92$. Notice that $P\left(\ket{\psi_1}\xrightarrow{\mathrm{ICO}}\ket{\psi_2}\right) = 0.8$ is not increased by the use of $\ket{\phi}$. This is a consequence of our following proposition.
\begin{prop}
If, under the best strategy of ICO, $P\left(\ket{\psi_1}\xrightarrow{\mathrm{ICO}}\ket{\psi_2}\right)$ is equal to $|\psi_1^{d-1}|/|\psi_2^{d-1}|$, then this probability cannot be increased by the presence of any (catalyst) state. Here $\ket{\psi_1} = \sum_{i=0}^{d-1} \sqrt{\psi_1^i} \ket{i}$ and $\ket{\psi_2} = \sum_{i=0}^{d-1} \sqrt{\psi_2^i} \ket{i}$.
\end{prop}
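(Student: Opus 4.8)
The plan is to deduce the claim directly from the optimal-probability formula \eqref{eq:opt-prob}, exploiting the single fact that the ``bottom'' ratio $C_{d-1}(\psi_1)/C_{d-1}(\psi_2)=|\psi_1^{d-1}|/|\psi_2^{d-1}|$ is multiplicative under tensoring. Let $\ket{\phi}=\sum_{j=0}^{k-1}\sqrt{\phi_j}\ket{j}$ be an arbitrary candidate catalyst; as throughout the paper we may take $\phi_0\geq\phi_1\geq\cdots\geq\phi_{k-1}$, since an incoherent permutation of the catalyst register changes neither side of the transformation nor the optimal probability. As $\ket{\psi_1}\otimes\ket{\phi}$ and $\ket{\psi_2}\otimes\ket{\phi}$ are again pure states, now of a $dk$-dimensional system, \eqref{eq:opt-prob} gives
\begin{align}
P\!\left(\ket{\psi_1}\otimes\ket{\phi}\xrightarrow{\mathrm{ICO}}\ket{\psi_2}\otimes\ket{\phi}\right)=\min_{m\in[0,dk-1]}\frac{C_m(\psi_1\otimes\phi)}{C_m(\psi_2\otimes\phi)},\nonumber
\end{align}
where each $C_m$ is computed on the vector of products $\{\psi_r^i\phi_j\}$ after sorting it into decreasing order.

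The key step is to upper bound this minimum by the value of the ratio at $m=dk-1$. For any probability vector $C_{dk-1}$ is simply its smallest entry, and the smallest entry of the tensor product of two decreasingly ordered probability vectors is the product of their two smallest entries; hence $C_{dk-1}(\psi_r\otimes\phi)=|\psi_r^{d-1}|\,|\phi_{k-1}|$ for $r=1,2$. (If $\ket{\phi}$ has vanishing coefficients one replaces $|\phi_{k-1}|$ by its smallest nonzero coefficient throughout, and it cancels regardless.) Thus the term at $m=dk-1$ equals $|\psi_1^{d-1}|/|\psi_2^{d-1}|$, so
\begin{align}
P\!\left(\ket{\psi_1}\otimes\ket{\phi}\xrightarrow{\mathrm{ICO}}\ket{\psi_2}\otimes\ket{\phi}\right)\leq\frac{|\psi_1^{d-1}|}{|\psi_2^{d-1}|}=P\!\left(\ket{\psi_1}\xrightarrow{\mathrm{ICO}}\ket{\psi_2}\right),\nonumber
\end{align}
the final equality being the hypothesis of the proposition. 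Conversely, applying the optimal ICO for $\ket{\psi_1}\to\ket{\psi_2}$ on the first factor together with the identity on $\ket{\phi}$ is an incoherent (selective) operation that succeeds with the same probability and returns $\ket{\phi}$ intact, whence $P(\ket{\psi_1}\otimes\ket{\phi}\xrightarrow{\mathrm{ICO}}\ket{\psi_2}\otimes\ket{\phi})\geq P(\ket{\psi_1}\xrightarrow{\mathrm{ICO}}\ket{\psi_2})$. Combining the two inequalities gives equality: the presence of $\ket{\phi}$ leaves the optimal probability unchanged at $|\psi_1^{d-1}|/|\psi_2^{d-1}|$.

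I do not anticipate a genuine obstacle here; the only points that need care are the re-ordering of the $dk$ tensor-product coefficients before invoking \eqref{eq:opt-prob} and the degenerate case of vanishing coefficients, both routine. The same argument moreover extends verbatim to a tensor product of several catalyst states, since the smallest entry of such a product is still the product of the individual smallest entries.
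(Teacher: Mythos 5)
Your proof is correct and follows essentially the same route as the paper's: bound the minimum in Eq.~\eqref{eq:opt-prob} from above by the term at the last index, whose numerator and denominator are the smallest tensor-product coefficients $|\psi_1^{d-1}\phi_{k-1}|$ and $|\psi_2^{d-1}\phi_{k-1}|$, so the catalyst's contribution cancels. Your added remarks on re-sorting the tensor-product coefficients, vanishing entries, and the trivial lower bound are sound but not needed for the claim as stated, which only asserts that the probability cannot increase.
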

\begin{mproof}
If, under the best strategy of ICO, $P(\ket{\psi_1}\rightarrow\ket{\psi_2})$ is equal to $|\psi^{d-1}_1|/|\psi^{d-1}_2|$,
then for any catalyst state $\ket{\phi}=\sum^{m}_{i=1}\sqrt{\phi_i}\ket{i}$,
the minimal coefficients of $\ket{\psi_1}\otimes\ket{\phi}$ and $\ket{\psi_2}\otimes\ket{\phi}$ are $\sqrt{\psi^{d-1}_1 \phi_{m}}$ and
$\sqrt{\psi^{d-1}_2 \phi_{m}}$, respectively. Thus, $P(\ket{\psi_1}\otimes\ket{\phi}\rightarrow\ket{\psi_2}\otimes\ket{\phi}
)=\min_{l\in[0,(d-1)m]}\frac{C_l(\psi_1\otimes\phi)}{C_l(\psi_2\otimes\phi)}\leq \frac{C_{d-1,m}(\psi_1\otimes\phi)}{C_{d-1,m}(\psi_2\otimes\phi)}=\frac{|\psi^{d-1}_1 \phi_{m}|}{|\psi^{d-1}_2 \phi_{m}|}=|\psi^{d-1}_1|/|\psi^{d-1}_2|$.
\end{mproof}
We note that the above proposition can be strengthened and we provide the necessary and sufficient conditions for the enhancement of the optimal probability for transformations under incoherent operations in the presence of catalysts as our next proposition.
\begin{prop}
For two pure states $\ket{\psi_1}=\sum^{d-1}_{i=0}\sqrt{\psi_1^i}\ket{i}$ and $\ket{\psi_2}=\sum^{d-1}_{i=0}\sqrt{\psi_2^i}\ket{i}$ there exists a catalyst $\ket{\phi}$ such that $P(\ket{\psi_1}\otimes\ket{\phi}\xrightarrow{\mathrm{ICO}}\ket{\psi_2}\otimes\ket{\phi})>P(\ket{\psi_1}\xrightarrow{\mathrm{ICO}}\ket{\psi_2})$, if and only if
\begin{eqnarray*}
P\left(\ket{\psi_1}\xrightarrow{\mathrm{ICO}}\ket{\psi_2}\right)< \min\left\{\frac{|\psi^{d-1}_1|}{|\psi^{d-1}_2|}, 1\right\}.
\end{eqnarray*}
\end{prop}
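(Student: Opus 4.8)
The plan is to establish the two implications separately. Write $P:=P\left(\ket{\psi_1}\xrightarrow{\mathrm{ICO}}\ket{\psi_2}\right)=\min_{l\in[0,d-1]}C_l(\psi_1)/C_l(\psi_2)$ with $C_l(\psi)=\sum_{i=l}^{d-1}|\psi^i|$. The term $l=0$ contributes the ratio $1$ and the term $l=d-1$ contributes $|\psi_1^{d-1}|/|\psi_2^{d-1}|$, so $P\le\min\{|\psi_1^{d-1}|/|\psi_2^{d-1}|,1\}$ for every pair of states; moreover, for any catalyst $\ket{\phi}$ the smallest coefficient of $\ket{\psi_k}\otimes\ket{\phi}$ is the product of the smallest coefficients of $\ket{\psi_k}$ and of $\ket{\phi}$, so by the same reasoning $P\left(\ket{\psi_1}\otimes\ket{\phi}\xrightarrow{\mathrm{ICO}}\ket{\psi_2}\otimes\ket{\phi}\right)\le\min\{|\psi_1^{d-1}|/|\psi_2^{d-1}|,1\}$ as well (this is exactly the estimate behind the preceding proposition). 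Hence if some catalyst makes $P\left(\ket{\psi_1}\otimes\ket{\phi}\xrightarrow{\mathrm{ICO}}\ket{\psi_2}\otimes\ket{\phi}\right)$ strictly larger than $P$, then necessarily $P<\min\{|\psi_1^{d-1}|/|\psi_2^{d-1}|,1\}$, which is the ``only if'' direction.

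For the ``if'' direction I would argue constructively. Assume $P<\min\{|\psi_1^{d-1}|/|\psi_2^{d-1}|,1\}$ and fix an index $l^*$ attaining the minimum defining $P$; the two strict inequalities force $l^*\notin\{0,d-1\}$ (so $d\ge3$), and $C_{l^*}(\psi_1)/C_{l^*}(\psi_2)=P$ together with $C_0(\psi_1)/C_0(\psi_2)=1$ gives $\sum_{i<l^*}|\psi_1^i|>\sum_{i<l^*}|\psi_2^i|$. The strategy is to pick a catalyst whose spectrum is neither too flat nor too peaked, so that in the sorted coefficient lists of $\psi_1\otimes\phi$ and of $\psi_2\otimes\phi$ the block of coefficients built from $\psi_k^{l^*-1}$ interleaves with the block built from $\psi_k^{l^*}$; then no tail sum $C_m(\psi_k\otimes\phi)$ reproduces the critical value $C_{l^*}(\psi_k)$, hence none reproduces the critical ratio $P$. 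For $d=3$ and $l^*=1$ I expect a two-level catalyst $\ket{\phi}=\sqrt{\phi_0}\ket{0}+\sqrt{\phi_1}\ket{1}$ to already suffice, provided $\phi_0/\phi_1\in[\max_k(|\psi_k^0|/|\psi_k^1|),\ \min_k(|\psi_k^1|/|\psi_k^2|)]$, which is the window in which the sorted product coefficients are $\psi_k^0\phi_0,\psi_k^1\phi_0,\psi_k^0\phi_1,\psi_k^1\phi_1,\psi_k^2\phi_0,\psi_k^2\phi_1$ for both $k=1,2$. One then checks the finitely many tail-sum ratios directly: the ones at block boundaries have the shape $(1-\alpha|\psi_1^i|)/(1-\alpha|\psi_2^i|)$ or $(|\psi_1^2|\phi_0+\beta|\psi_1^i|)/(|\psi_2^2|\phi_0+\beta|\psi_2^i|)$ with $\alpha,\beta\in(0,1)$, and cross-multiplying while using $|\psi_1^0|>|\psi_2^0|$ and $|\psi_1^2|/|\psi_2^2|>P$ shows each exceeds $P$; at the remaining indices the ratio is a fractional-linear, hence monotone, function of an interpolation parameter lying between two adjacent boundary values, both already $>P$. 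Since only finitely many tail sums occur, the minimum of the ratios is strictly above $P$, i.e.\ $P\left(\ket{\psi_1}\otimes\ket{\phi}\xrightarrow{\mathrm{ICO}}\ket{\psi_2}\otimes\ket{\phi}\right)>P$.

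The main obstacle is that a two-level catalyst realizes the needed interleaving of all blocks only under extra size conditions on the coefficients---for $d=3$ the window above must be non-empty, i.e.\ $\max_k(|\psi_k^0|/|\psi_k^1|)\le\min_k(|\psi_k^1|/|\psi_k^2|)$---so for general states, general $d$, and general position of $l^*$ one must instead take a catalyst of larger dimension $N$ (or several tensor copies of a two-level state) and control how the $dN$ product coefficients sort, verifying that no tail set of the sorted product list ever reconstitutes a ratio $\le P$. This combinatorial bookkeeping is the technical heart of the argument; it runs parallel to the known construction for catalyst-assisted probabilistic entanglement transformations, which, via the correspondence between the resource theories of coherence and entanglement noted above, transfers to the present setting.
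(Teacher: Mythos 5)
Your argument is essentially the paper's, once the dust settles: the paper's entire proof is a two-line reduction, combining the majorization criterion (Theorem~\ref{th:DBG}) and the formula \eqref{eq:opt-prob} for the optimal probability with the known result of Feng et al.\ on catalyst-assisted probabilistic majorization (restated as Theorem~\ref{th:Feng2005} in the appendix), which is verbatim the statement you are asked to prove, phrased for probability vectors. Your ``only if'' direction is correct and is the same tail-coefficient estimate used in the preceding proposition. For the ``if'' direction you attempt an explicit catalyst construction, correctly locate the critical index $l^*\notin\{0,d-1\}$, and work out a plausible $d=3$ case, but you concede that the general interleaving bookkeeping is not carried out and close the gap by appealing to ``the known construction for catalyst-assisted probabilistic entanglement transformations'' --- which is exactly the citation the paper makes. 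So the proposal is correct as a whole, but only because its final step invokes the same external theorem the paper does; the constructive detour, taken on its own, is incomplete (and unnecessary given that the cited result already delivers the full equivalence for arbitrary $d$).
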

\begin{mproof}
The proof follows directly from Theorem \ref{th:DBG} of main text and Theorem \ref{th:Feng2005} of the appendix.
\end{mproof}

\section{Necessary and sufficient conditions for deterministic catalytic coherence transformations}
\label{sec:nec-suff}
We know that under incoherent operations, in the absence of catalysts, the necessary and sufficient conditions for transforming a pure state $\ket{\psi_1}$ to another pure state $\ket{\psi_2}$ are given by Theorem \ref{th:DBG}, i.e., $C_r(\ket{\psi_1}) \geq C_r(\ket{\psi_2})$. Here $C_r(\ket{\psi})$ denotes the relative entropy of coherence of $\ket{\psi}$ \cite{Baumgratz2014}. Now if we allow for catalysts, does the decrease of relative entropy of coherence, $C_r(\ket{\psi_1}) \geq C_r(\ket{\psi_2})$, ensure existence of an incoherent operation that maps $\ket{\psi_1}$ to $\ket{\psi_2}$? In the following we prove that this is not the case, i.e., mere decrease of the relative entropy of coherence is not sufficient. We next characterize the necessary and sufficient conditions for catalytic coherence transformations between the initial state $\ket{\psi_1}$ and the target state $\ket{\psi_2}$.
\begin{prop}\label{ne-sf-c}
For two pure states $\ket{\psi_1},\ket{\psi_2}\in \mathcal{H}(d)$, if the coefficients of $\ket{\psi_1}$, in a fixed basis, are all nonzero, then the necessary and sufficient conditions for catalytic coherence transformations are the simultaneous
decrease of a family of R\'enyi entropies which are defined as $S_\alpha(\psi^\md)=\mathrm{sgn}(\alpha)\ln\left( \mathrm{Tr}\left[\left(\psi^\md\right)^\alpha\right]\right)/(1-\alpha)$. Here $\psi^\md$ is the diagonal part of the pure state $\ket{\psi}$ and $\mathrm{sgn}(\alpha)=1$ for $\alpha\geq 0$, and $\mathrm{sgn}(\alpha)=-1$ when $\alpha< 0$. More precisely, there exists a catalyst state $\ket{\phi}$ such that $ \ket{\psi_1}\otimes\ket{\phi}\xrightarrow{\mathrm{ICO}}\ket{\psi_2}\otimes\ket{\phi}$ if and only if the conditions
\begin{align}
\frac{\tilde{S}_\alpha\left(\psi^\md_2\right)}{|\alpha|}&<\frac{\tilde{S}_\alpha\left(\psi^\md_1\right)}{|\alpha|}
\end{align}
are satisfied simultaneously for all $\alpha\in (-\infty, +\infty)$, where $\tilde{S}_\alpha\left(\psi^\md\right)=S_\alpha\left(\psi^\md\right)-\ln d$. For $\alpha=0$, $\tilde{S}_\alpha\left(\psi^\md\right)/|\alpha|=\lim_{\alpha\rightarrow 0^+}\tilde{S}_\alpha\left(\psi^\md\right)/|\alpha|=\sum^d_{i=1}\ln\psi^\md_i/d$ where $\psi^\md_i$ are components of $\psi^\md$ in a fixed reference basis.
\end{prop}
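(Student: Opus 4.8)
The plan is to reduce the claim to the theory of \emph{trumping} (catalytic) majorization and then to match the functionals of that theory to the normalized R\'enyi entropies written in the statement. First, by Theorem~\ref{th:DBG} applied on the composite Hilbert space equipped with the product reference basis, and using that the diagonal part factorizes, $(\psi_1\otimes\phi)^\md=\psi^\md_1\otimes\phi^\md$, the transformation $\ket{\psi_1}\otimes\ket{\phi}\xrightarrow{\mathrm{ICO}}\ket{\psi_2}\otimes\ket{\phi}$ is possible if and only if $\psi^\md_1\otimes\phi^\md\prec\psi^\md_2\otimes\phi^\md$. Hence ``there exists a catalyst $\ket{\phi}$'' is precisely the statement that the probability vector $\psi^\md_1$ is trumped by $\psi^\md_2$. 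The hypothesis that all coefficients of $\ket{\psi_1}$ are nonzero is exactly the full-support condition on $\psi^\md_1$ under which trumping is cleanly characterizable; implicitly one also excludes the degenerate case in which $\psi^\md_1$ is a permutation of $\psi^\md_2$, where a catalyst exists trivially.

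Second, I would invoke the characterization of trumping majorization due to Klimesh and Turgut, recalled in the appendix: for a full-support probability vector $x$ and a probability vector $y$ that is not a permutation of $x$, $x$ is trumped by $y$ if and only if a distinguished one-parameter family of additive Schur-concave functionals is \emph{strictly} larger on $x$ than on $y$. With this in hand the ``only if'' direction of the proposition is the easy half: with the $\mathrm{sgn}(\alpha)$ convention each $S_\alpha$ is Schur-concave (the sign flip for $\alpha<0$ is exactly what turns the otherwise Schur-convex negative-order R\'enyi entropy into a Schur-concave one), each $S_\alpha$ is additive under tensor products, and $\ln d$ is additive in the dimension, so $g_\alpha:=\tilde S_\alpha/|\alpha|$ is additive; thus $\psi^\md_1\otimes\phi^\md\prec\psi^\md_2\otimes\phi^\md$ forces $g_\alpha(\psi^\md_1)\ge g_\alpha(\psi^\md_2)$ once the catalyst contribution cancels, and the strictness then follows from strict Schur-concavity of $g_\alpha$ at each finite $\alpha$ together with the full-support hypothesis (which also keeps $g_\alpha(\psi^\md_1)$ finite for $\alpha\le 0$), plus a standard cancellation argument for multisets that rules out $\psi^\md_1\otimes\phi^\md$ being a permutation of $\psi^\md_2\otimes\phi^\md$. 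The ``if'' direction is the substantive content and is delivered directly by the Klimesh--Turgut construction of a catalyst.

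Third, I would set up the dictionary between the functionals in that theorem and the $g_\alpha$ used here. The division by $|\alpha|$ together with the shift by $\ln d$ is designed precisely to glue the regimes $\alpha>1$, $0\le\alpha<1$ and $\alpha<0$ into one family indexed by $\alpha\in(-\infty,+\infty)$ with a common orientation and to make the endpoints behave: $\alpha\to1$ reproduces the relative-entropy-of-coherence condition $C_r(\ket{\psi_2})<C_r(\ket{\psi_1})$, the conditions as $\alpha\to\pm\infty$ reduce to the previously noted necessary conditions $|\psi_1^{0}|\le|\psi_2^{0}|$ and $|\psi_1^{d-1}|\ge|\psi_2^{d-1}|$, and a Taylor expansion $\mathrm{Tr}[(\psi^\md)^\alpha]=\sum_i(\psi^\md_i)^\alpha=d+\alpha\sum_i\ln\psi^\md_i+O(\alpha^2)$ around $\alpha=0$, substituted into $\tilde S_\alpha/|\alpha|$, gives the value claimed at $\alpha=0$; this limit is finite precisely because $\psi^\md_1$ has full support.

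I expect the main obstacle to be neither the reduction in the first step nor the deep trumping theorem (which is taken as an input), but the bookkeeping in this last step: verifying that the single parametrization $\{g_\alpha\}_{\alpha\in\mathbb{R}}$ is genuinely equivalent to the collection of functionals in the Klimesh--Turgut statement -- so that no necessary condition is dropped and none is redundant -- while correctly tracking the sign changes across $\alpha=0$ and $\alpha=1$, the discontinuity of $g_\alpha$ at $\alpha=0$ (whose divergent part cancels in the comparison between $\psi^\md_1$ and $\psi^\md_2$), and the one-sided limits, so that the isolated value at $\alpha=0$ and the asymptotics as $\alpha\to\pm\infty$ are consistent with the open range $(-\infty,+\infty)$ appearing in the statement.
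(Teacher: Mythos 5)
Your proposal is correct and follows essentially the same route as the paper: reduce the catalytic transformation to trumping of the diagonal vectors via Theorem~\ref{th:DBG}, invoke the Turgut characterization (Lemma~\ref{lem:turgut}), and verify the algebraic dictionary identifying $\tilde S_\alpha(\cdot)/|\alpha|$ with the functionals $A_\alpha$ across the regimes $\alpha<0$, $0<\alpha<1$, $\alpha=0,1$, and $\alpha>1$. Your extra derivation of the ``only if'' half from Schur concavity and additivity is redundant (the cited lemma is already an equivalence), but it does not introduce any error.
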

\begin{mproof}
We use a result from Ref. \cite{Turgut2007} (which we restate as Lemma \ref{lem:turgut} in appendix for clarity and completeness) to prove our proposition. For $\alpha \neq \{0,1\}$ note that $\tilde{S}_\alpha\left(\psi^\md\right)/|\alpha|=\ln A_\alpha\left(\psi^\md\right)/{(1-\alpha)} +\ln d/\alpha(1-\alpha)-\ln d/|\alpha|$, where $A_\alpha\left(\psi^\md\right) = \left(\frac{1}{d}\sum^{d}_{i=1}{\left(\psi_i^\md\right)}^\alpha\right)^{1/\alpha}$ (as in Lemma \ref{lem:turgut}). So for $\alpha \neq \{0,1\}$, the proof of our proposition follows from Lemma \ref{lem:turgut} and Theorem \ref{th:DBG}. Similarly, for $\alpha =1$, the proof follows from Lemma \ref{lem:turgut} and Theorem \ref{th:DBG}. For $\alpha=0$ the proof follows again by noting that

\begin{align*}
\lim_{\alpha\rightarrow0^+}\frac{\tilde{S}_\alpha\left(\psi_1^{\md}\right)}{|\alpha|}= \frac{1}{d}\sum_{i=1}^d \ln \psi_1^i = \ln A_0(\psi_1^{\md}),
\end{align*}
where, for any probability vector $p$, $A_0(p)=\left(\prod_{i=1}^{d}p_i\right)^{\frac{1}{d}}$, as in Lemma \ref{lem:turgut} and $\psi_1^{\md}=\left(\psi_1^1,\ldots,\psi_1^d\right)^T$. This completes the proof of proposition.
\end{mproof}

We emphasize that Proposition \ref{ne-sf-c} assumes that the initial state $\ket{\psi_1}$ must contain only nonzero entries. But this problem can be remedied by allowing slight perturbation to the initial state. Moreover, the strict inequality in Proposition \ref{ne-sf-c} can be made nonstrict. In this view, we generalize Proposition \ref{ne-sf-c} to the following proposition.

\begin{prop}\label{gn-sf-nc}
For two pure states $\ket{\psi_1}$ and $\ket{\psi_2}$, the following two conditions are equivalent:\\
1. For a given pure state $\ket{\psi_1}$ there exists a state $\ket{\psi^\varepsilon_1}$ with $\varepsilon>0$ and a catalyst state $\ket{\phi}$ such that (i) $||\ket{\psi_1}-\ket{\psi^\varepsilon_1}||<\varepsilon$; (ii) $\ket{\psi^\varepsilon_1}\otimes\ket{\phi}\xrightarrow{\mathrm{ICO}} \ket{\psi_2}\otimes\ket{\phi}$.\\
2. For all  $\alpha\in (-\infty, +\infty)$
\begin{eqnarray}
\frac{\tilde{S}_\alpha\left(\psi^\md_2\right)}{|\alpha|}\leq\frac{\tilde{S}_\alpha\left(\psi^\md_1\right)}{|\alpha|}.
\end{eqnarray}
\end{prop}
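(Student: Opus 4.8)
The plan is to bootstrap from Proposition~\ref{ne-sf-c} (the version for initial states with only nonzero coefficients and strict inequalities) using two elementary facts about the family $\bigl\{\tilde S_\alpha(\cdot)/|\alpha|\bigr\}_{\alpha\in(-\infty,\infty)}$. First, each member is additive under tensor products, since $\mathrm{Tr}\bigl[(p\otimes r)^\alpha\bigr]=\mathrm{Tr}[p^\alpha]\,\mathrm{Tr}[r^\alpha]$ and $\ln(d_pd_r)=\ln d_p+\ln d_r$. Second, each member is Schur-concave, and in fact strictly Schur-concave on the probability simplex: for finite $\alpha\neq 0$ this follows from strict Schur-convexity/concavity of $p\mapsto\mathrm{Tr}[p^\alpha]$ with the sign corrected by $\mathrm{sgn}(\alpha)$, for $\alpha=1$ it is strict concavity of the Shannon entropy, and for $\alpha=0$ one must use the value $\frac{1}{d}\sum_i\ln p_i$ fixed in the statement of Proposition~\ref{ne-sf-c}, which is strictly Schur-concave (it is not the Hartley entropy). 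Finally, $p\mapsto\tilde S_\alpha(p)/|\alpha|$ is continuous as a map into $[-\infty,0]$, so limits may be taken freely even when some coordinates vanish. I read condition~1 as: for \emph{every} $\varepsilon>0$ there is such a $\ket{\psi^\varepsilon_1}$ (otherwise it would be vacuously satisfied by taking $\varepsilon$ large and $\ket{\psi^\varepsilon_1}=\ket{\psi_2}$).

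\emph{$(1)\Rightarrow(2)$.} Fix $\alpha$. For each $\varepsilon>0$, condition~1 yields $\ket{\psi^\varepsilon_1},\ket\phi$ with $\ket{\psi^\varepsilon_1}\otimes\ket\phi\xrightarrow{\mathrm{ICO}}\ket{\psi_2}\otimes\ket\phi$, i.e.\ $(\psi^\varepsilon_1)^\md\otimes\phi^\md\prec\psi^\md_2\otimes\phi^\md$ by Theorem~\ref{th:DBG}; we may take $\phi^\md$ to have full support (deleting the zero coordinates of $\ket\phi$ is harmless), so that $\tilde S_\alpha(\phi^\md)/|\alpha|$ is finite. Applying the additive, Schur-concave functional $\tilde S_\alpha(\cdot)/|\alpha|$ to both sides and cancelling the catalyst term gives $\tilde S_\alpha(\psi^\md_2)/|\alpha|\le\tilde S_\alpha\bigl((\psi^\varepsilon_1)^\md\bigr)/|\alpha|$. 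Since $(\psi^\varepsilon_1)^\md\to\psi^\md_1$ as $\varepsilon\to0$, continuity gives $\tilde S_\alpha(\psi^\md_2)/|\alpha|\le\tilde S_\alpha(\psi^\md_1)/|\alpha|$, and $\alpha$ was arbitrary.

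\emph{$(2)\Rightarrow(1)$.} Assume condition~2 and fix $\varepsilon>0$. If $\psi^\md_1$ equals the uniform vector $u=(1/d,\dots,1/d)$ then $u\prec\psi^\md_2$, so $\ket{\psi_1}\xrightarrow{\mathrm{ICO}}\ket{\psi_2}$ already by Theorem~\ref{th:DBG} and condition~1 holds with $\ket{\psi^\varepsilon_1}=\ket{\psi_1}$ and a trivial catalyst; so assume $\psi^\md_1\neq u$. Set $\ket{\psi^\varepsilon_1}=\sum_i\sqrt{(1-t)\psi_1^i+t/d}\,\ket i$ with $t\in(0,1)$ small enough that $\|\ket{\psi_1}-\ket{\psi^\varepsilon_1}\|<\varepsilon$. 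Then $(\psi^\varepsilon_1)^\md$ has all coordinates strictly positive, and $(\psi^\varepsilon_1)^\md\prec\psi^\md_1$ with $(\psi^\varepsilon_1)^\md$ not a permutation of $\psi^\md_1$ (its largest coordinate is strictly smaller). Strict Schur-concavity of each family member therefore gives $\tilde S_\alpha(\psi^\md_1)/|\alpha|<\tilde S_\alpha\bigl((\psi^\varepsilon_1)^\md\bigr)/|\alpha|$ for \emph{all} $\alpha$ at once, which together with condition~2 yields the strict inequalities $\tilde S_\alpha(\psi^\md_2)/|\alpha|<\tilde S_\alpha\bigl((\psi^\varepsilon_1)^\md\bigr)/|\alpha|$ for all $\alpha\in(-\infty,\infty)$. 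As $\ket{\psi^\varepsilon_1}$ has only nonzero coefficients, Proposition~\ref{ne-sf-c} now supplies a catalyst $\ket\phi$ with $\ket{\psi^\varepsilon_1}\otimes\ket\phi\xrightarrow{\mathrm{ICO}}\ket{\psi_2}\otimes\ket\phi$. This is exactly condition~1 for this $\varepsilon$, and $\varepsilon$ was arbitrary.

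\emph{Main obstacle.} The subtle points are uniformity and boundary behaviour. The crux is that one must make all the strict inequalities of Proposition~\ref{ne-sf-c} hold \emph{simultaneously} for the whole one-parameter family using a \emph{single} perturbation parameter $t$, which is why strict Schur-concavity of every $\tilde S_\alpha/|\alpha|$—including the correctly-interpreted $\alpha=0$ member—is indispensable; and one must check that the additivity/cancellation step and the limit $\varepsilon\to0$ in $(1)\Rightarrow(2)$ remain valid when $\psi_1^i$, $\psi_2^i$ or $\phi_i$ vanish and the negative-$\alpha$ entropies become $-\infty$. The maximally-coherent case $\psi^\md_1=u$ and the full-support reduction of the catalyst are minor bookkeeping by comparison.
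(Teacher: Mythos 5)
Your proof is correct and follows essentially the same route as the paper's: the same perturbation $(1-\varepsilon)\psi_1^\md+\varepsilon u$, strict Schur concavity of $\tilde S_\alpha(\cdot)/|\alpha|$, and Proposition \ref{ne-sf-c} for $(2)\Rightarrow(1)$, together with a limiting argument for $(1)\Rightarrow(2)$. Your two refinements --- deriving $(1)\Rightarrow(2)$ directly from the majorization relation of Theorem \ref{th:DBG} via additivity and Schur concavity of $\tilde S_\alpha(\cdot)/|\alpha|$ instead of the paper's ``perturb to a full-support state that still works'' reduction, and separately treating the case $\psi_1^\md=u$ where the perturbation acts as the identity and strict Schur concavity yields no strict inequality --- are both worthwhile, the second one patching a genuine (if minor) omission in the paper's $(2)\Rightarrow(1)$ argument.
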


\begin{mproof}
We first prove the implication $1\Rightarrow2$. Although $\ket{\psi^\varepsilon_1}$ may have zero component, there always exists a state $\ket{\psi^{\varepsilon'}_1}$ close to $\ket{\psi^\varepsilon_1}$ with nonzero components only, which also satisfy (i) and (ii) in the condition 1. Thus, without loss of any generality, we can assume the components of $\ket{\psi^\varepsilon_1}$ are all nonzero. Since $\ket{\psi^\varepsilon_1}\otimes\ket{\phi}\xrightarrow{\mathrm{ICO}} \ket{\psi_2}\otimes\ket{\phi}$, then by Proposition \ref{ne-sf-c},
\begin{eqnarray*}
\frac{\tilde{S}_\alpha\left(\psi^\md_2\right)}{|\alpha|}<\frac{\tilde{S}_\alpha\left((\psi^\varepsilon_1)^\md\right)}{|\alpha|}
\end{eqnarray*}
for every $\varepsilon >0$. Based on the continuity of functions $\tilde{S}_\alpha(\cdot)/|\alpha|$, we have $\tilde{S}_\alpha\left(\psi^\md_2\right)/|\alpha|\leq \tilde{S}_\alpha\left(\psi^\md_1\right)/|\alpha|$.

Now we prove the implication $2\Rightarrow1$. For $\ket{\psi_1}=\sum^d_{i=1}\sqrt{\psi^i_1}\ket{i}$ let $\ket{\psi^\varepsilon_1}=\sum^d_{i=1}\sqrt{(1-\varepsilon)\psi^i_1+\varepsilon/d}\ket{i}$. Then
$||\ket{\psi_1}-\ket{\psi^\varepsilon_1}||\rightarrow 0$, when $\varepsilon\rightarrow 0$. Due to Lemma \ref{Shurconv} of appendix, we know
the functions $\tilde{S}_\alpha(\cdot)/|\alpha|$ are strictly Schur concave for all $\alpha\in(-\infty, \infty)$, thus $\tilde{S}_\alpha\left(\psi^\md_1\right)/|\alpha|<\tilde{S}_\alpha\left((\psi^\varepsilon_1)^\md\right)/|\alpha|$ for every $\varepsilon >0$. As $\tilde{S}_\alpha\left(\psi^\md_2\right)/|\alpha|\leq \tilde{S}_\alpha\left(\psi^\md_1\right)/|\alpha|$, we have
\begin{eqnarray*}
\frac{\tilde{S}_\alpha\left(\psi^\md_2\right)}{|\alpha|}<\frac{\tilde{S}_\alpha\left((\psi^\varepsilon_1)^\md\right)}{|\alpha|}.
\end{eqnarray*}
It is easy to see the coefficients of $\ket{\psi^\varepsilon_1}$ are all nonzero. Hence, by Proposition \ref{ne-sf-c}, there exists a  catalyst in state $\ket{\phi}$ such that $\ket{\psi^\varepsilon_1}\otimes\ket{\phi}\xrightarrow{\mathrm{ICO}} \ket{\psi_2}\otimes\ket{\phi}$. This completes the proof.
\end{mproof}
To elaborate more about the necessary and sufficient conditions for catalytic coherence transformations we consider various examples. Fig. \ref{figure2} shows that for states  $\ket{\psi_{1}} = \sqrt{0.4}\ket{0}+ \sqrt{0.4}\ket{1}+ \sqrt{0.1}\ket{2}+ \sqrt{0.1}\ket{3}$ and $\ket{\psi_{2}} = \sqrt{0.5}\ket{0}+ \sqrt{0.25}\ket{1}+ \sqrt{0.25}\ket{2}$ a catalytic transformation is possible but for states $\ket{\psi_{1}} = \sqrt{0.5}\ket{0}+ \sqrt{0.4}\ket{1}+ \sqrt{0.1}\ket{2}$ and $\ket{\psi_{2}} = \sqrt{0.6}\ket{0}+\sqrt{0.25}\ket{1}+ \sqrt{0.15}\ket{2}$ no catalytic transformation is possible.
Moreover, using the similar techniques as in Ref. \cite{Brandao2015}, we can remove the $(-\infty, 0)$ part with the help of another ancillary qubit.
\begin{prop}
\label{prop:rem-zero}
For pure states $\ket{\psi_1}$ and $\ket{\psi_2}$, the following two conditions are equivalent:\\
1. For a given pure state $\ket{\psi_1}$ there exist states $\ket{\psi^\varepsilon_1}\otimes\ket{0^\varepsilon}$ with $\varepsilon>0$ and $\ket{\phi}$ such that
(i) $||\ket{\psi_1}\otimes\ket{0}-\ket{\psi^\varepsilon_1}\otimes\ket{0^\varepsilon}||<\varepsilon$;
(ii) $\ket{\psi^\varepsilon_1}\otimes\ket{0^\varepsilon}\otimes\ket{\phi}\xrightarrow{\mathrm{ICO}} \ket{\psi_2}\otimes\ket{0}\otimes\ket{\phi}$.\\
2. For all  $\alpha\in [0, +\infty)$
\begin{eqnarray}
\frac{\tilde{S}_\alpha\left(\psi^\md_2\right)}{|\alpha|}\leq\frac{\tilde{S}_\alpha\left(\psi^\md_1\right)}{|\alpha|}.
\end{eqnarray}
\end{prop}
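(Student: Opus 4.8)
The plan is to deduce the statement from Proposition~\ref{gn-sf-nc} (equivalently Proposition~\ref{ne-sf-c}) applied to the \emph{enlarged} pair $\ket{\psi_1}\otimes\ket{0}$, $\ket{\psi_2}\otimes\ket{0}$, using two facts: $\tilde{S}_\alpha$ is additive under tensor products, and the ancillary qubit $\ket{0}$ attached to the \emph{target} has a vanishing diagonal entry. Additivity gives, for the diagonal parts, $\tilde{S}_\alpha(\psi_k^\md\otimes p)/|\alpha|=\tilde{S}_\alpha(\psi_k^\md)/|\alpha|+\tilde{S}_\alpha(p)/|\alpha|$ for any qubit diagonal $p$. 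On the target side $p=(1,0)$, and since $x\mapsto x^\alpha$ blows up at $x=0$ for $\alpha<0$ (and the $\alpha=0$ limit likewise diverges), one has $\tilde{S}_\alpha((1,0))/|\alpha|=-\infty$ for every $\alpha\le 0$; on the source side the perturbed qubit $\ket{0^\varepsilon}=\sqrt{1-\varepsilon}\ket{0}+\sqrt{\varepsilon}\ket{1}$ has full support, so its $\tilde{S}_\alpha/|\alpha|$ is finite. Hence, for the enlarged pair, every inequality of the form $\tilde{S}_\alpha(\mathrm{target}^\md)/|\alpha|\le\tilde{S}_\alpha(\mathrm{source}^\md)/|\alpha|$ with $\alpha\le 0$ reads ``$-\infty\le\,\cdot\,$'' and holds automatically, irrespective of $\psi_1,\psi_2$, while for $\alpha>0$ the qubit contributions are finite and cancel, so the inequality collapses onto the condition~2 inequality at $\alpha$. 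Thus the whole family indexed by $(-\infty,\infty)$ for the enlarged pair reduces to condition~2, which is indexed by $[0,\infty)$.

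For the implication $2\Rightarrow 1$, I would rerun the $2\Rightarrow 1$ argument of Proposition~\ref{gn-sf-nc} with a \emph{product} perturbation: take $\ket{\psi_1^\varepsilon}=\sum_i\sqrt{(1-\varepsilon)\psi_1^i+\varepsilon/d}\,\ket{i}$ and $\ket{0^\varepsilon}=\sqrt{1-\varepsilon}\ket{0}+\sqrt{\varepsilon}\ket{1}$, so that $\ket{\psi_1^\varepsilon}\otimes\ket{0^\varepsilon}$ has only nonzero coefficients, is within $\varepsilon$ of $\ket{\psi_1}\otimes\ket{0}$ once $\varepsilon$ is small, and has diagonal $(\psi_1^\varepsilon)^\md\otimes(1-\varepsilon,\varepsilon)$. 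Since $(\psi_1^\varepsilon)^\md\prec\psi_1^\md$ and $(1-\varepsilon,\varepsilon)\prec(1,0)$, the strict Schur concavity of $\tilde{S}_\alpha(\cdot)/|\alpha|$ (Lemma~\ref{Shurconv}) upgrades each $\alpha>0$ inequality of condition~2 to a strict inequality for the enlarged pair; the $\alpha\le 0$ inequalities are strict by the observation above; and $\alpha\in\{0,1\}$ are handled by the limiting arguments already used in the proof of Proposition~\ref{ne-sf-c}. Proposition~\ref{ne-sf-c} then yields a catalyst $\ket{\phi}$ with $\ket{\psi_1^\varepsilon}\otimes\ket{0^\varepsilon}\otimes\ket{\phi}\xrightarrow{\mathrm{ICO}}\ket{\psi_2}\otimes\ket{0}\otimes\ket{\phi}$, which is condition~1. (In the degenerate case where $\psi_1^\md$ is uniform, condition~2 at $\alpha=1$ forces $\psi_2^\md$ uniform, so $\ket{\psi_1}$ and $\ket{\psi_2}$ agree up to a permutation and the transformation is trivial.)

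For $1\Rightarrow 2$, condition~1 is precisely the special case of condition~1 of Proposition~\ref{gn-sf-nc} for the enlarged pair in which the perturbation of $\ket{\psi_1}\otimes\ket{0}$ is of product form, so Proposition~\ref{gn-sf-nc} immediately gives $\tilde{S}_\alpha(\psi_2^\md\otimes(1,0))/|\alpha|\le\tilde{S}_\alpha(\psi_1^\md\otimes(1,0))/|\alpha|$ for all $\alpha\in(-\infty,\infty)$; restricting to $\alpha>0$ and cancelling the finite, equal qubit terms gives condition~2 on $(0,\infty)$, and the continuity of $\alpha\mapsto\tilde{S}_\alpha(p)/|\alpha|$ at $\alpha=0^+$ (as an extended-real-valued function) supplies the endpoint $\alpha=0$. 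I expect the main obstacle to be the boundary bookkeeping: making the identity $\tilde{S}_\alpha((1,0))/|\alpha|=-\infty$ for $\alpha\le 0$ rigorous in a way that genuinely renders those members of the monotone family vacuous inside the criterion of Proposition~\ref{ne-sf-c} (rather than merely ill-defined), together with the $\alpha\in\{0,1\}$ limits. This is precisely the reduction mechanism behind the second laws of thermodynamics in Ref.~\cite{Brandao2015}, which I would follow.
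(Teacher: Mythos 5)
Your proposal is correct and follows essentially the same route as the paper's own proof: perturb the source to the product state $\ket{\psi_1^\varepsilon}\otimes\ket{0^\varepsilon}$, use additivity of $\tilde S_\alpha/|\alpha|$ together with the fact that the target factor $(1,0)$ drives the $\alpha\le 0$ members of the family to $-\infty$ (rendering them vacuous), and then invoke Propositions \ref{ne-sf-c} and \ref{gn-sf-nc} plus continuity at $\alpha=0^+$. One small slip in an unnecessary aside: your claim that, when $\psi_1^\md$ is uniform, condition 2 at $\alpha=1$ forces $\psi_2^\md$ to be uniform is false (it only gives $S(\psi_2^\md)\le\ln d$, which always holds); the aside is not needed anyway, since the strict majorization $(1-\varepsilon,\varepsilon)\prec(1,0)$ of the qubit factor alone supplies, via additivity, the strictness required by Proposition \ref{ne-sf-c} for every $\alpha$, and the uniform case is in any event trivial by Theorem \ref{th:DBG} without any catalyst.
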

\begin{mproof}
($1 \Rightarrow 2$) Similar to the proof of Proposition \ref{gn-sf-nc},  without loss of generality, we can assume the components of $\ket{\psi^\varepsilon_1}$ and $\ket{0^\varepsilon}$  are all nonzero. Since $\ket{\psi^\varepsilon_1}\otimes\ket{0^\varepsilon}\otimes\ket{\phi}\xrightarrow{\mathrm{ICO}} \ket{\psi_2}\otimes\ket{0}\otimes\ket{\phi}$, then by Proposition \ref{ne-sf-c},
\begin{eqnarray*}
\frac{\tilde{S}_\alpha\left((\proj{\psi_2}\otimes \proj{0})^\md\right)}{|\alpha|}<\frac{\tilde{S}_\alpha\left((\proj{\psi^\varepsilon_1}\otimes \proj{0^\varepsilon})^\md\right)}{|\alpha|}
\end{eqnarray*}
 for every $\varepsilon >0$. Based on the continuity and additivity of $\frac{\tilde{S}_\alpha(\cdot)}{|\alpha|}$ we have $\frac{\tilde{S}_\alpha\left(\psi^\md_2\right)}{|\alpha|}\leq\frac{\tilde{S}_\alpha\left(\psi^\md_1\right)}{|\alpha|}$.\\
 ($2 \Rightarrow 1$)
For $\ket{\psi_1}=\sum^d_{i=1}\sqrt{\psi^i_1}\ket{i}$ let $\ket{\psi^\varepsilon_1}=\sum^d_{i=1}\sqrt{(1-\varepsilon)\psi^i_1+\varepsilon/d}\ket{i}$,
and $\ket{0^\varepsilon}=\sqrt{1-\varepsilon/2}\ket{0}+\sqrt{\varepsilon/2}\ket{1}$.
 Then
$||\ket{\psi_1}\otimes\ket{0}-\ket{\psi^\varepsilon_1}\otimes\ket{0^\varepsilon}||\rightarrow 0$, when $\varepsilon\rightarrow 0$. Similar to proof of Proposition \ref{gn-sf-nc}, it is easy to obtain
\begin{align}\label{ineq_Sa}
\frac{\tilde{S}_\alpha\left((\proj{\psi_2}\otimes \proj{0})^\md\right)}{|\alpha|}<\frac{\tilde{S}_\alpha\left((\proj{\psi^\varepsilon_1}\otimes \proj{0^\varepsilon})^\md\right)}{|\alpha|}
\end{align}
for all  $\alpha\in [0, +\infty)$. In the case of  $\alpha<0$, due to the definition of $\tilde{S}_\alpha$, the
left side of inequality \ref{ineq_Sa} will be $-\infty$, and the right side will be finite.
By Proposition \ref{ne-sf-c}, there exist catalyst state $\ket{\phi}$ such that $\ket{\psi^\varepsilon_1}\otimes\ket{0^\varepsilon}\otimes\ket{\phi}\xrightarrow{\mathrm{ICO}} \ket{\psi_2}\otimes\ket{0}\otimes\ket{\phi}$. This completes the proof.
\end{mproof}

In fact, we need not worry about the rank of the initial state. If $\ket{\psi_1}$ can be
transformed to $\ket{\psi_2}$ using the catalyst $\ket{\phi}$, then
$C_s(\ket{\psi_2}\otimes\ket{\phi})\leq C_s(\ket{\psi_1}\otimes\ket{\phi})$ where $C_s(\ket{\psi}):=\mathrm{Rank}(\psi^\md)$ and is a proper measure of coherence \cite{Bu2015}. $\psi^\md$ is the diagonal part of $\ket{\psi}$ in the fixed reference basis. This implies $C_s(\ket{\psi_2})\leq C_s(\ket{\psi_1})$, i.e., $\mathrm{Rank}(\psi^\md_2)\leq\mathrm{Rank}(\psi^\md_1)$. Therefore, $\ket{\psi_1}$ and $\ket{\psi_2}$ can also be viewed as pure states in $\mathcal{H}(d')$, where $d'=\max \{C_s(\psi_1), C_s(\psi_2)\}=C_s(\psi_1)$; $\ket{\psi_1}$ will be full rank and the above propositions can be used.

\begin{figure}
\subfigure[]{
\includegraphics[width=40 mm]{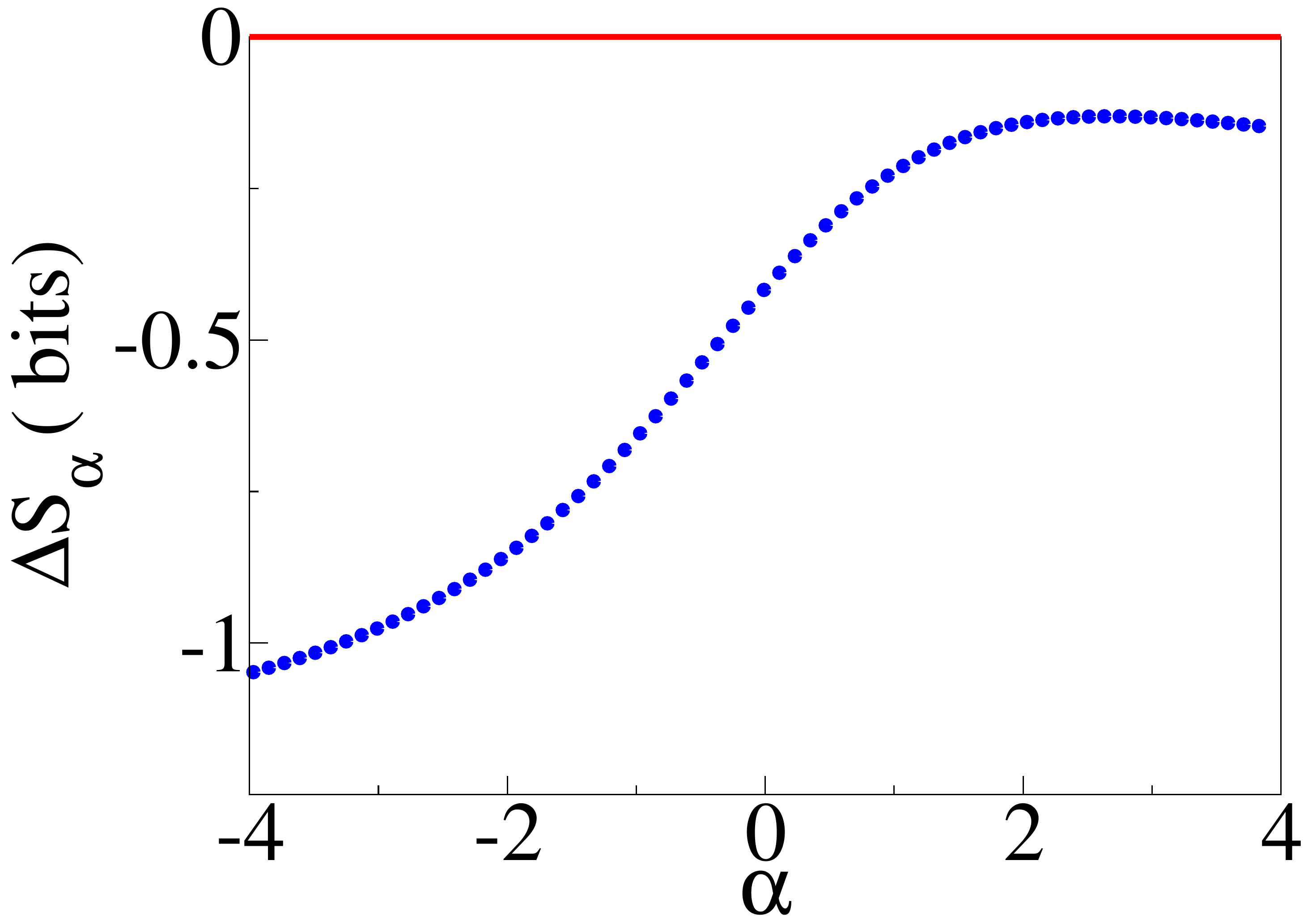}{\label{fig4}}}
\subfigure[]{
\includegraphics[width=40 mm]{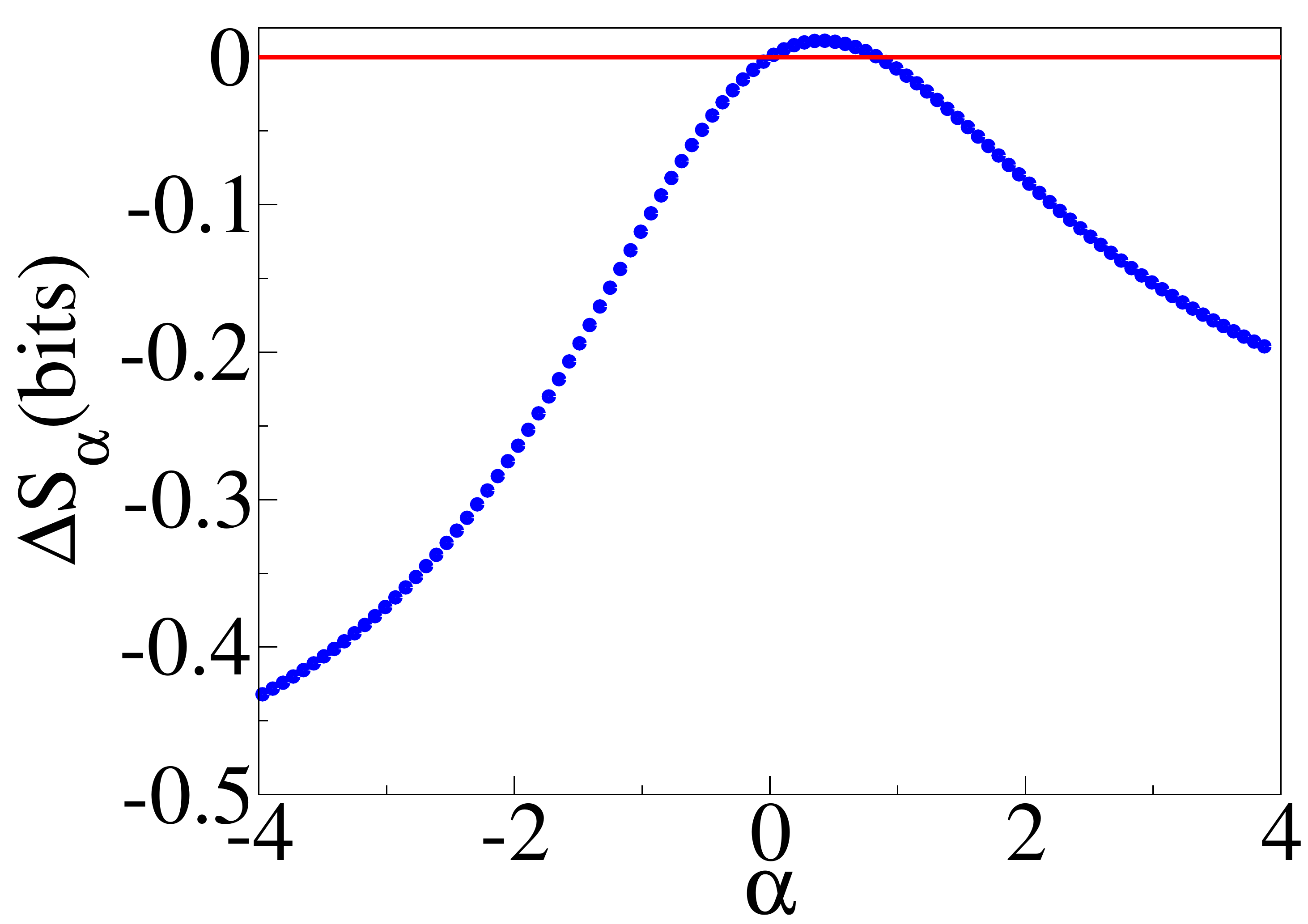}{\label{fig2}}}
\caption{(Color online) The plot shows the variation of $\Delta \tilde{S}_\alpha = \tilde{S}_\alpha(\psi^\md_{2}) - \tilde{S}_\alpha(\psi^\md_{1})$ as a function of $\alpha$. In Fig. \ref{fig4}, we take $\ket{\psi_{1}} = \sqrt{0.4}\ket{0}+ \sqrt{0.4}\ket{1}+ \sqrt{0.1}\ket{2}+ \sqrt{0.1}\ket{3}$ and $\ket{\psi_{2}} = \sqrt{0.5}\ket{0}+ \sqrt{0.25}\ket{1}+ \sqrt{0.25}\ket{2}$. Based on our Proposition \ref{ne-sf-c}, the transformation from $\ket{\psi_1}$ to $\ket{\psi_2}$  will be possible with the aid of a catalyst. In Fig. \ref{fig2}, we take $\ket{\psi_{1}} = \sqrt{0.5}\ket{0}+ \sqrt{0.4}\ket{1}+ \sqrt{0.1}\ket{2}$ and $\ket{\psi_{2}} = \sqrt{0.6}\ket{0}+\sqrt{0.25}\ket{1}+ \sqrt{0.15}\ket{2}$. Because for certain values of $\alpha$, $\Delta \tilde{S}_\alpha$ increases, from Proposition \ref{ne-sf-c}, there does not exist a catalyst that can allow the transformation from $\ket{\psi_1}$ to $\ket{\psi_2}$ in this case.}\label{figure2}
\end{figure}

All the Propositions \ref{ne-sf-c}, \ref{gn-sf-nc}, and \ref{prop:rem-zero} tell us that in order to check whether the transformation under consideration is possible, we need to check infinitely many conditions, thus, making the proposition only of theoretical merit. However, as we show below in Proposition \ref{prop:two-con}, if only a few conditions on R\'enyi entropy hold, then they suffice to show that all other conditions hold automatically.
\begin{prop}
\label{prop:two-con}
Consider two pure states $\ket{\psi_1}$ and $\ket{\psi_2}$. Given $\varepsilon>0$, we can construct two pure states $\ket{\phi_1}$ and $\ket{\phi_2}$ with $\phi^\md_1\in B_\varepsilon(\psi^\md_1)$, $\phi^\md_2\in B_\varepsilon(\psi^\md_2)$. Here, for two probability vectors $x$ and $y$, $B_\varepsilon(x)$ is the $\varepsilon$ ball around $x$ and is defined as $B_\varepsilon(x):=\set{y:\frac{1}{2}\sum_i|y_i-x_i|<\varepsilon}$. Consider following conditions:
\begin{subequations}
\begin{align}
   &S_0(\psi^\md_2)\leq S_0(\phi^\md_1)+\frac{\log\varepsilon}{1-\alpha}  ~~~(\text{for~} 0<\alpha<1); \label{eq:condi}\\
   &S_\infty(\phi^\md_2)-\frac{\log\varepsilon}{\alpha-1}\leq S_\infty(\psi^\md_1) ~~~ (\text{for~} \alpha>1);\label{eq:condi1}\\
   &\frac{\tilde{S}_\alpha(\psi^\md_2)}{|\alpha|}\leq \frac{\tilde{S}_\alpha(\psi^\md_1)}{|\alpha|}  ~~~(\text{for~} \alpha=0).\label{eq:condi11}
  \end{align}
 \end{subequations}
If conditions (\ref{eq:condi}), (\ref{eq:condi1}), and (\ref{eq:condi11}) hold, then for any $\alpha\in[0,\infty)$,
\begin{eqnarray}
\frac{\tilde{S}_\alpha(\psi^\md_2)}{|\alpha|}\leq\frac{\tilde{S}_\alpha(\psi^\md_1)}{|\alpha|}.
\end{eqnarray}
\end{prop}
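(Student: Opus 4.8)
The plan is to reduce the claim to the single statement $S_\alpha(\psi^\md_2)\le S_\alpha(\psi^\md_1)$ for every $\alpha>0$. Embedding $\psi^\md_1$ and $\psi^\md_2$ into a common $\mathcal H(d)$ (legitimate since the rank cannot increase under the transformation, cf.\ the remark following Proposition \ref{prop:rem-zero}) we have $\tilde S_\alpha=S_\alpha-\ln d$, so for $\alpha>0$ the inequality $\tilde S_\alpha(\psi^\md_2)/|\alpha|\le\tilde S_\alpha(\psi^\md_1)/|\alpha|$ is equivalent to $S_\alpha(\psi^\md_2)\le S_\alpha(\psi^\md_1)$, while the case $\alpha=0$ of the desired conclusion is literally hypothesis \eqref{eq:condi11}. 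Two elementary facts about a fixed probability vector $p$ will be used throughout: $\alpha\mapsto S_\alpha(p)$ is non-increasing, so $S_0(p)\ge S_\alpha(p)\ge S_\infty(p)$ for all $\alpha\ge 0$; and it is continuous on $(0,\infty)$, so $S_1(p)=\lim_{\alpha\to 1}S_\alpha(p)$.

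The technical heart is two smoothing estimates for the constructed states. I would take $\phi^\md_1$ to be $\psi^\md_1$ with its smallest weights (of total mass just below $\varepsilon$) deleted and the rest renormalized, so $\phi^\md_1\in B_\varepsilon(\psi^\md_1)$; restricting $\sum_i(\psi^i_1)^\alpha$ to the retained support and using the power-mean inequality for the concave map $x\mapsto x^\alpha$, $0<\alpha<1$, gives $S_\alpha(\psi^\md_1)\ge S_0(\phi^\md_1)+\tfrac{\alpha}{1-\alpha}\ln(1-\varepsilon)\ge S_0(\phi^\md_1)+\tfrac{\log\varepsilon}{1-\alpha}$, the last inequality valid once $\varepsilon\le\tfrac12$ (where $\alpha\ln(1-\varepsilon)\ge\log\varepsilon$). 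Dually, I would take $\phi^\md_2$ to be the water-filling of $\psi^\md_2$ that shaves its largest weights down to a common level $t$ while spending exactly the budget $\varepsilon$, so $\phi^\md_2\in B_\varepsilon(\psi^\md_2)$ and $\max_i\phi^i_2=t$; then for $\alpha>1$, $\sum_i(\psi^i_2)^\alpha\ge\sum_{i:\,\psi^i_2>t}(\psi^i_2-t)(\psi^i_2)^{\alpha-1}\ge t^{\alpha-1}\sum_{i:\,\psi^i_2>t}(\psi^i_2-t)=\varepsilon\,t^{\alpha-1}$, which rearranges to $S_\alpha(\psi^\md_2)\le S_\infty(\phi^\md_2)-\tfrac{\log\varepsilon}{\alpha-1}$. (The degenerate near-uniform case, where the full budget cannot be spent, is handled separately and is trivial.)

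With these the conclusion is a sandwich, obtained range by range. For $0<\alpha<1$: $S_\alpha(\psi^\md_2)\le S_0(\psi^\md_2)$ by monotonicity, then $S_0(\psi^\md_2)\le S_0(\phi^\md_1)+\log\varepsilon/(1-\alpha)$ by \eqref{eq:condi}, then $S_0(\phi^\md_1)+\log\varepsilon/(1-\alpha)\le S_\alpha(\psi^\md_1)$ by the first smoothing estimate. For $\alpha>1$: $S_\alpha(\psi^\md_2)\le S_\infty(\phi^\md_2)-\log\varepsilon/(\alpha-1)$ by the second smoothing estimate, then $S_\infty(\phi^\md_2)-\log\varepsilon/(\alpha-1)\le S_\infty(\psi^\md_1)$ by \eqref{eq:condi1}, then $S_\infty(\psi^\md_1)\le S_\alpha(\psi^\md_1)$ by monotonicity. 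The endpoint $\alpha=1$ follows from either range by continuity of $S_\alpha$ as $\alpha\to 1$, and $\alpha=0$ is \eqref{eq:condi11}; this exhausts $[0,\infty)$.

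The step I expect to be the main obstacle is making the two smoothing estimates airtight: one has to confirm the truncated and water-filled probability vectors genuinely lie in $B_\varepsilon$, isolate the small-$\varepsilon$ regime in which $\tfrac{\alpha}{1-\alpha}\ln(1-\varepsilon)\ge\log\varepsilon/(1-\alpha)$ holds uniformly in $\alpha\in(0,1)$, verify the water-filling identity $\sum_{i:\,\psi^i_2>t}(\psi^i_2-t)=\varepsilon$, and clean up the degenerate near-uniform distributions. Everything else---the common-dimension embedding, the monotonicity and continuity of $\alpha\mapsto S_\alpha$, and the translation between $S_\alpha$ and $\tilde S_\alpha/|\alpha|$---is routine.
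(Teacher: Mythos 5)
Your proof is correct and follows essentially the same route as the paper's: reduce to $S_\alpha(\psi^\md_2)\le S_\alpha(\psi^\md_1)$, then sandwich the two Rényi entropies through $S_0$ (for $0<\alpha<1$) and through $S_\infty$ (for $\alpha>1$) using the monotonicity of $\alpha\mapsto S_\alpha$ together with the two smoothing estimates, handling $\alpha=1$ by continuity and $\alpha=0$ by hypothesis (\ref{eq:condi11}). The only difference is that the paper obtains the two smoothing estimates by directly invoking Lemma \ref{lem:ineq_con}, whereas you rederive them from scratch via truncation and water-filling; your derivations (power-mean bound for $0<\alpha<1$, the $\sum_i (\psi^i_2)^\alpha\ge\varepsilon\,t^{\alpha-1}$ bound for $\alpha>1$, and the trivial near-uniform degenerate case) are sound, modulo the small-$\varepsilon$ restriction you already flag.
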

\begin{proof}
Based on Lemma \ref{lem:ineq_con} of appendix, we can construct two pure states $\ket{\phi_1}$ and
$\ket{\phi_2}$ with $\phi^\md_1\in B_\varepsilon(\psi^\md_1)$, $\phi^\md_2\in B_\varepsilon(\psi^\md_2)$, such that
\begin{subequations}
\begin{align}
   &S_\alpha(\psi^\md_1)\geq S_0(\phi^\md_1)+\frac{\log\varepsilon}{1-\alpha} ~~~ (\text{for~} 0<\alpha<1);\label{eq:condi2}\\
   &S_\infty(\phi^\md_2)-\frac{\log\varepsilon}{\alpha-1}\geq S_\alpha(\psi^\md_2)  ~~~ (\text{for~} \alpha>1).\label{eq:condi22}
  \end{align}
  \end{subequations}
Note that for any probability vector $x$, $S_\alpha(x)\leq S_\beta(x)$ if $\alpha\geq\beta$. Now for $0<\alpha<1$, from conditions (\ref{eq:condi2}) and (\ref{eq:condi}), we have
\begin{align}
\tilde{S}_\alpha(\psi^\md_1)&=S_\alpha(\psi^\md_1)-\ln d\nonumber\\
&\geq S_0(\phi^\md_1)+\frac{\log\varepsilon}{1-\alpha}-\ln d\nonumber\\
&\geq S_0(\psi^\md_2)-\ln d\nonumber\\
&\geq S_\alpha(\psi^\md_2)-\ln d=\tilde{S}_\alpha(\psi^\md_2).
\end{align}
Similarly, for $\alpha>1$, from conditions (\ref{eq:condi1}) and (\ref{eq:condi22}), we have
\begin{align}
 \tilde{S}_\alpha(\psi^\md_1) &= S_\alpha(\psi^\md_1)-\ln d\nonumber\\
 &\geq S_\infty(\psi^\md_1) - \ln d\nonumber\\
 &\geq S_\infty(\phi^\md_2)-\frac{\log\varepsilon}{\alpha-1}- \ln d\nonumber\\
 &\geq \tilde{S}_\alpha(\psi^\md_2).
 \end{align}
Combining the above two equations with the condition (\ref{eq:condi11}), we get
\begin{eqnarray*}
\frac{\tilde{S}_\alpha(\psi^\md_2)}{|\alpha|}\leq\frac{\tilde{S}_\alpha(\psi^\md_1)}{|\alpha|},
\end{eqnarray*}
for all values of $\alpha\in[0,\infty)$, where $\alpha=1$ case comes from the continuity of $\tilde{S}_\alpha(\cdot)/|\alpha|$.
\end{proof}
Hence, we only need to check conditions (\ref{eq:condi}), (\ref{eq:condi1}), and (\ref{eq:condi11}) to determine whether the transformation between two pure states is possible with the aid of catalysts. This establishes the practicality of Propositions \ref{ne-sf-c}, \ref{gn-sf-nc}, and \ref{prop:rem-zero}.

\section{Entanglement assisted coherence transformations}
\label{sec:b-cat-coh}
Consider a pair of pure states such that there exists no catalytic incoherent transformation (see Fig. \ref{figure2}) between them. Can we find an incoherent operation between such pair of states with some assistance of another physical resource? The following proposition answers this question. We follow the proof techniques of Ref. \cite{Mueller2015} to prove the proposition.
\begin{prop}\label{cohcr}
For any pure states $\ket{\psi_1}$ and $\ket{\psi_2}$ there exist a $k$-partite pure state $\ket{\tilde{\phi}}_{1,\ldots,k}$ and $\ket{\phi_1}\otimes\ldots\otimes\ket{\phi_k}$ such that
\begin{eqnarray*}
\ket{\psi_1}\otimes\ket{\tilde{\phi}}_{1,\ldots,k}\xrightarrow{\mathrm{ICO}} \ket{\psi_2}\otimes\ket{\phi_1}\otimes\ldots\otimes\ket{\phi_k}
\end{eqnarray*}
with $\phi^\md_i=\tilde{\phi}_i^\md:=\mathrm{Tr}_{\set{1,\ldots,k}/i}\tilde{\phi}^\md$ and $k\leq 3$ if and only if the following two conditions are satisfied: (1)
$C_s(\ket{\psi_2})\leq C_s(\ket{\psi_1})$  and (2) $C_r(\ket{\psi_2})<C_r(\ket{\psi_1})$. Here $C_s$ is a proper coherence measure defined in \cite{Bu2015}, which for a pure state is equal to the number of nonzero coefficients in the state spanned in the reference basis.
\end{prop}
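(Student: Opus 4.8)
The plan is to reduce the whole statement to an assertion about majorization of the diagonal distributions and then to mirror, step by step, the entanglement-assisted argument of Ref.~\cite{Mueller2015} via the coherence--entanglement dictionary. By Theorem~\ref{th:DBG}, the incoherent transformation $\ket{\psi_1}\otimes\ket{\tilde\phi}\xrightarrow{\mathrm{ICO}}\ket{\psi_2}\otimes\ket{\phi_1}\otimes\cdots\otimes\ket{\phi_k}$ exists precisely when $\psi^\md_1\otimes\tilde\phi^\md \prec \psi^\md_2\otimes\phi^\md_1\otimes\cdots\otimes\phi^\md_k$, so the proposition becomes: there is a $k$-partite probability distribution $\tilde\phi^\md$ with $k\le 3$, having marginals $\phi^\md_i$, for which this majorization holds if and only if (1) and (2) are satisfied. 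Using the remark preceding the proposition we may in addition assume $\ket{\psi_1}$ is full rank, passing to $\mathcal{H}(d')$ with $d'=C_s(\psi_1)$.

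For the only-if direction I would exploit the two monotones directly. Since $C_s$ is an incoherent monotone \cite{Bu2015} and is multiplicative under tensor products, while the support of any joint distribution is contained in the tensor product of the supports of its marginals (so $\mathrm{Rank}(\tilde\phi^\md)\le\prod_i\mathrm{Rank}(\phi^\md_i)$), the chain $C_s(\psi_2)\prod_i\mathrm{Rank}(\phi^\md_i)\le C_s(\psi_1)\,\mathrm{Rank}(\tilde\phi^\md)\le C_s(\psi_1)\prod_i\mathrm{Rank}(\phi^\md_i)$ yields (1). For (2), recall $C_r(\ket\psi)=S(\psi^\md)$ is monotone and additive, and subadditivity of the von Neumann entropy gives $S(\tilde\phi^\md)\le\sum_iS(\phi^\md_i)$; combining these gives $C_r(\psi_2)\le C_r(\psi_1)$. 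Strictness is the one delicate point: if equality held, every inequality above would collapse, forcing $\tilde\phi^\md=\bigotimes_i\phi^\md_i$, hence a genuine catalytic transformation with a \emph{product} catalyst; then Proposition~\ref{ne-sf-c} (equivalently Lemma~\ref{lem:turgut}) with the $\alpha=1$ Rényi entropies equal forces $\psi^\md_1$ and $\psi^\md_2$ to coincide up to a permutation, which is the excluded trivial case.

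The if direction is the substantive part and the main obstacle. Following Ref.~\cite{Mueller2015} I would build the correlated catalyst explicitly on at most three subsystems: one subsystem carries a near-uniform distribution serving as an entropy ``buffer'', and correlations among the subsystems are engineered so that the \emph{joint} distribution $\tilde\phi^\md$ has strictly smaller entropy than the \emph{product} $\bigotimes_i\phi^\md_i$ of its marginals, by an amount tuned to make $\psi^\md_1\otimes\tilde\phi^\md \prec \psi^\md_2\otimes\bigotimes_i\phi^\md_i$ hold. The strict gap $C_r(\psi_1)-C_r(\psi_2)>0$ supplies exactly this entropy deficit (it must dominate the total correlation of $\tilde\phi^\md$, which subadditivity bounds below by zero), while the rank condition $C_s(\psi_2)\le C_s(\psi_1)$ guarantees the support dimensions are compatible. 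The real work is (a) writing down this three-partite distribution and its marginals, and (b) checking all the ordered partial-sum inequalities of the majorization relation for the two tensor-product distributions; I expect (b), i.e.\ controlling the partial sums of $\psi^\md_1\otimes\tilde\phi^\md$ against those of $\psi^\md_2\otimes\bigotimes_i\phi^\md_i$ uniformly, to be where the bulk of the estimates reside. Once the majorization is verified, Theorem~\ref{th:DBG} converts it back into the required incoherent operation, completing the proof.
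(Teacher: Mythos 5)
Your opening reduction is exactly the paper's: by Theorem~\ref{th:DBG} the assisted incoherent transformation is equivalent to the majorization
$\psi^\md_1\otimes\tilde{\phi}^\md \prec \psi^\md_2\otimes\phi^\md_1\otimes\cdots\otimes\phi^\md_k$, and at that point the statement you must prove is, word for word, Lemma~\ref{lem:Muller} of the appendix with $q=\psi^\md_1$, $p=\psi^\md_2$, $r_{1,\ldots,k}=\tilde{\phi}^\md$ (so $\mathrm{Rank}(p)\le\mathrm{Rank}(q)$ becomes condition (1) and $S(p)<S(q)$ becomes condition (2)). The paper's proof stops there and invokes that lemma as a black box. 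Your ``only if'' direction is a correct self-contained re-derivation of half of it: majorization forces $\mathrm{Rank}$ of the majorizing vector to be no larger, multiplicativity of the rank plus the support-containment bound $\mathrm{Rank}(\tilde{\phi}^\md)\le\prod_i\mathrm{Rank}(\phi^\md_i)$ gives (1), and Schur concavity plus subadditivity gives $C_r(\psi_2)\le C_r(\psi_1)$; your handling of the equality case (collapse to a product catalyst, then strictness at $\alpha=1$ from Lemma~\ref{lem:turgut}) is sound, provided one notes, as you do, that $\psi^\md_1=\psi^\md_2$ up to permutation is excluded by the hypothesis $p\neq q$ of the lemma.

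The ``if'' direction, however, is not a proof as written: the explicit three-partite correlated distribution, its marginals, and the verification of all ordered partial-sum inequalities --- the items you label (a) and (b) and defer --- constitute the entire content of the M\"uller--Pastena theorem, and nothing in your sketch (an ``entropy buffer'' plus correlations tuned to the gap $C_r(\psi_1)-C_r(\psi_2)$) substitutes for that construction. So judged as a self-contained argument there is a genuine gap precisely where you flag it; judged against the paper's own standard there is none, and the cleaner move is simply to cite Lemma~\ref{lem:Muller} after the reduction, as the paper does, rather than half-reprove it. One further small point: before applying the lemma you should, as the paper's preceding remark does, restrict to $\mathcal{H}(d')$ with $d'=C_s(\psi_1)$ so that the probability vectors live in a common dimension with $\psi^\md_1$ of full rank; you mention this but it belongs in the ``if'' direction as well as the ``only if''.
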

\begin{mproof} Note that from Theorem \ref{th:DBG}, $\ket{\psi_1}\otimes\ket{\tilde{\phi}}_{1,\ldots,k}\xrightarrow{\mathrm{ICO}} \ket{\psi_2}\otimes\ket{\phi_1}\otimes\ldots\otimes\ket{\phi_k}$ is equivalent to $\psi^{(d)}_1\otimes\tilde{\phi}^\md_{1,\dots,k}\prec \psi^{(d)}_2\otimes\phi^\md_1\otimes\ldots\otimes\phi^\md_k$. Then the proof of our proposition follows from Lemma \ref{lem:Muller} of the appendix.
\end{mproof}
Now let us apply Proposition \ref{cohcr} to a numerical example. Consider two pure states $\ket{\psi_1}$ and $\ket{\psi_2}$ with $\psi^\md_{1}=(0.5,0.4,0.1)$ and $\psi^\md_{2}=(0.6,0.25,0.15)$. Then, we know that $\ket{\psi_1}$ cannot be transformed to $\psi_2$ using any catalyst as there exists an $\alpha$ such that $\tilde{S}_\alpha\left(\psi^\md_{2}\right)-\tilde{S}_\alpha\left(\psi^\md_{1}\right)<0$ (see Fig. \ref{figure2}). Thus, $\ket{\psi_1}\otimes\ket{\phi}^{\otimes n}\rightarrow\ket{\psi_2}\otimes\ket{\phi}^{\otimes n}$ is not possible. However, $\ket{\psi_1}$ can  be transformed to $\ket{\psi_2}$ using an entanglement assisted incoherent transformation as $C_r(\ket{\psi_2})<C_r(\ket{\psi_1})$ and $C_s(\ket{\psi_2})=C_s(\ket{\psi_1})$ (see Proposition \ref{cohcr}). We emphasize here that in the above process of entanglement assisted incoherent transformation, coherence in the ancillary system is not consumed. This can be proved by using the following fact
\begin{align*}
&C_r(\ket{\phi_1}\otimes\dots\otimes\ket{\phi_k})-C_r(\ket{\tilde{\phi}}_{1,\dots,k})\\
&~~~~~~~~=\sum^k_{i=1}S\left(\phi^\md_i\right)-S\left(\tilde{\phi}^\md_{1,\dots,k}\right)\geq0.
\end{align*}
Particularly, when $k=2$, for pure states $\ket{\tilde{\phi}}_{12}$, $\ket{\phi}_1$ and $\ket{\phi}_2$, where $\phi^\md_1=\tilde{\phi}^\md_1, \phi^\md_2=\tilde{\phi}^\md_2$, we have $ C_r\left(\ket{\phi}_1\otimes \ket{\phi}_2\right)-C_r\left(\ket{\tilde{\phi}}_{12}\right)=I\left(\left(\ket{\tilde{\phi}}_{12}\right)^\md\right)\leq I\left(\ket{\tilde{\phi}}_{12}\right)= 2E_r\left(\ket{\tilde{\phi}}_{12}\right)$, giving an upper bound on the increased coherence. Here, $I(\rho_{12}) := S(\rho_1)+ S(\rho_2)-S(\rho_{12})$ is the mutual information of $\rho_{12}$ and $E_r\left(\ket{\tilde{\phi}}_{12}\right):=S\left(\mathrm{Tr}_2\left[\ket{\tilde{\phi}}\bra{\tilde{\phi}}_{12}\right]\right)$ is the entropy of entanglement of the state $\ket{\tilde{\phi}}_{12}$. Further, we generalize Proposition \ref{cohcr} to the following proposition. The proof techniques for the following proposition are adapted from Refs. \cite{Brandao2015, LostaglioB2015}.
\begin{prop}
For two pure states $\ket{\psi_1}$ and $\ket{\psi_2}$, the following two conditions are equivalent:\\
1. For a given state $\ket{\psi_1}$ there exist states $\ket{\psi^\varepsilon_1}$, $\ket{\tilde{\phi}}_{1,\ldots,k}$ and $\ket{\phi_1}\otimes\ldots\otimes\ket{\phi_k}$ with $\tilde{\phi}^\md_i=\phi_i^\md$ and $\varepsilon>0$ such that (i) $||\ket{\psi_1}-\ket{\psi^\varepsilon_1}||<\varepsilon$; (ii) $\ket{\psi^\varepsilon_1}\otimes\ket{\tilde{\phi}}_{1,\ldots,k}\xrightarrow{\mathrm{ICO}} \ket{\psi_2}\otimes\ket{\phi_1}\otimes\ldots\otimes\ket{\phi_k}$. Here, $k\leq 3$.\\
2. $C_r(\ket{\psi_2})\leq C_r(\ket{\psi_1})$.
\end{prop}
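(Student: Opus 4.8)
The plan is to establish the equivalence by reducing it, as in the proof of Proposition~\ref{cohcr}, to a majorization statement about diagonal parts and then invoking the relevant lemmas from the appendix together with a continuity/perturbation argument. First I would observe that, by Theorem~\ref{th:DBG}, condition~1(ii) is equivalent to $\psi^\md_1(\varepsilon)\otimes\tilde\phi^\md_{1,\dots,k}\prec \psi^\md_2\otimes\phi^\md_1\otimes\cdots\otimes\phi^\md_k$, so the whole statement is about the existence of such catalytic/entanglement-assisted majorization after an arbitrarily small perturbation of the source. The natural tool is Lemma~\ref{lem:Muller} (used already in Proposition~\ref{cohcr}), which says that exact $C_s$ and strict $C_r$ inequalities are necessary and sufficient for the unperturbed version; the point of the present proposition is to drop the $C_s$ constraint and relax $C_r$ to a non-strict inequality by paying the price of a perturbation, exactly the move made in Propositions~\ref{gn-sf-nc} and~\ref{prop:rem-zero}.

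For the implication $1\Rightarrow 2$ I would argue as in $1\Rightarrow2$ of Proposition~\ref{gn-sf-nc}: from $\ket{\psi^\varepsilon_1}\otimes\ket{\tilde\phi}_{1,\dots,k}\xrightarrow{\mathrm{ICO}}\ket{\psi_2}\otimes\ket{\phi_1}\otimes\cdots\otimes\ket{\phi_k}$ and the additivity of $C_r$ (relative entropy of coherence of a pure state equals $S(\psi^\md)$, which is additive under tensor products) together with the fact that $\sum_i S(\phi^\md_i)\ge S(\tilde\phi^\md_{1,\dots,k})$ (subadditivity of von Neumann entropy, since $\tilde\phi_i^\md=\phi_i^\md$), monotonicity of $C_r$ under incoherent operations gives $C_r(\ket{\psi_2})\le C_r(\ket{\psi^\varepsilon_1})$ for every $\varepsilon>0$; letting $\varepsilon\to 0$ and using continuity of $S(\cdot)$ yields $C_r(\ket{\psi_2})\le C_r(\ket{\psi_1})$. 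For $2\Rightarrow 1$ I would take the same smoothing family $\ket{\psi^\varepsilon_1}=\sum_i\sqrt{(1-\varepsilon)\psi^i_1+\varepsilon/d}\,\ket{i}$ as in Proposition~\ref{gn-sf-nc}; this makes $\ket{\psi^\varepsilon_1}$ full rank, so $C_s(\ket{\psi^\varepsilon_1})=d\ge C_s(\ket{\psi_2})$ automatically, and strict Schur-concavity of $S$ (Lemma~\ref{Shurconv}) upgrades the non-strict hypothesis $C_r(\ket{\psi_2})\le C_r(\ket{\psi_1})$ to the strict inequality $C_r(\ket{\psi_2})<C_r(\ket{\psi^\varepsilon_1})$ for every $\varepsilon>0$. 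Proposition~\ref{cohcr} then supplies the required $\ket{\tilde\phi}_{1,\dots,k}$ and $\ket{\phi_1}\otimes\cdots\otimes\ket{\phi_k}$ with $k\le 3$, and $||\ket{\psi_1}-\ket{\psi^\varepsilon_1}||\to 0$ gives~(i).

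The main obstacle I anticipate is the borderline case $C_r(\ket{\psi_2})=C_r(\ket{\psi_1})$ with $C_s(\ket{\psi_2})>C_s(\ket{\psi_1})$: here the unperturbed transformation is genuinely impossible because $C_s$ is a monotone, so the argument must lean essentially on the perturbation making the source full rank, and one has to check that smoothing does not accidentally destroy the $C_r$ comparison — this is precisely where strict Schur-concavity is needed and must be invoked carefully. A secondary technical point is verifying that the $k\le 3$ bound from Lemma~\ref{lem:Muller}/Proposition~\ref{cohcr} survives the limiting procedure; since $k$ is chosen afresh for each $\varepsilon$ and is uniformly bounded by $3$, this causes no difficulty, but it should be stated explicitly. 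Everything else is a routine transcription of the arguments already used for Propositions~\ref{gn-sf-nc} and~\ref{prop:rem-zero}.
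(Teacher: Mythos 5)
Your proposal is correct and follows essentially the same route as the paper: the implication $1\Rightarrow2$ via monotonicity and additivity of $C_r$ together with $\sum_i S(\phi_i^\md)\geq S(\tilde{\phi}^\md_{1,\ldots,k})$ and a limit $\varepsilon\to 0$, and the implication $2\Rightarrow1$ via the smoothing $\ket{\psi_1^\varepsilon}=\sum_i\sqrt{(1-\varepsilon)\psi_1^i+\varepsilon/d}\,\ket{i}$, which makes the source full rank and, by strict (Schur) concavity of the entropy, turns the non-strict $C_r$ inequality into the strict one needed to invoke Proposition~\ref{cohcr}. The only difference is cosmetic: you cite strict Schur concavity (Lemma~\ref{Shurconv}) where the paper cites strict concavity of $S(\cdot)$, and your explicit flagging of the borderline case and of the uniform bound $k\leq 3$ is a slight tightening of the paper's presentation rather than a departure from it.
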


\begin{mproof}
Let us first prove the implication $1\Rightarrow2$. Since $\ket{\psi^\varepsilon_1}\otimes\ket{\tilde{\phi}}_{1,\ldots,k}\xrightarrow{\mathrm{ICO}} \ket{\psi_2}\otimes\ket{\phi_1}\otimes\ldots\otimes\ket{\phi_k}$ and coherence cannot increase under incoherent operations, we have $C_r(\ket{\psi_2}\otimes\ket{\phi_1}\otimes\ldots\otimes\ket{\phi_k})\leq C_r(\ket{\psi^\varepsilon_1}\otimes\ket{\tilde{\phi}}_{1,\ldots,k})$. As $C_r(\ket{\phi_1}\otimes\ldots\otimes\ket{\phi_k})-C_r(\ket{\tilde{\phi}}_{1,\ldots,k})=\sum^k_{i=1}S\left(\phi^\md_i\right)-S\left(\tilde{\phi}^\md_{1,\ldots,k}\right)\geq0$, we have $C_r(\ket{\psi_2})\leq C_r(\ket{\psi^\varepsilon_1})$. Let $\varepsilon\rightarrow 0$, then $C_r(\ket{\psi_2})\leq C_r(\ket{\psi_1})$.

The implication $2\Rightarrow1$ can be proved as follows. For $\ket{\psi_1}=\sum^d_{i=1}\sqrt{\psi^i_1}\ket{i}$ let $\ket{\psi^\varepsilon_1}=\sum^d_{i=1}\sqrt{(1-\varepsilon)\psi^i_1+\varepsilon/d}\ket{i}$. Then
$||\ket{\psi_1}-\ket{\psi^\varepsilon_1}||\rightarrow 0$, when $\varepsilon\rightarrow 0$. Moreover,
$C_r(\ket{\psi_2})\leq C_r(\ket{\psi_1})$ is equivalent to $S\left(\psi^\md_2\right)\leq S\left(\psi^\md_1\right)$.
As $S(\cdot)$ is strictly concave, then $S\left(\psi^\md_2\right)\leq S\left(\psi^\md_1\right) <S\left(\tilde{\psi}^\md_1\right)$. It is easy to see
$C_s(\ket{\psi^\varepsilon_1})=d\geq C_s(\ket{\psi_2})$. Now using Proposition \ref{cohcr} we complete the proof.
\end{mproof}

\section{Conclusion and Outlook}
\label{sec:conclusion}
In this work we find the necessary and sufficient conditions for the deterministic and stochastic coherence transformations between pure quantum states mediated by catalysts using only incoherent operations. We first find the necessary and sufficient conditions for the possibility of the increase of the optimal probability of achieving an otherwise impossible transformation with the aid of a catalyst. Then, we show that for a given pair of pure quantum states, the necessary and sufficient conditions for a deterministic catalytic transformation from one state to another are the simultaneous decrease of a family of R\'enyi entropies of the corresponding diagonal parts of the given pure states in a fixed reference basis. We also discuss about the practicality of these conditions. Further, we delineate the structure of the catalysts and find that it is possible for a pure quantum state to act as a catalyst for itself for a given otherwise impossible state transformation using incoherent operations. This phenomena may be termed as {\it self catalysis}. Moreover, for the pair of states which violate the necessary and sufficient conditions for deterministic coherence transformations, we consider the possibility of using an entangled state and show that even though there exists no catalyst for such a pair of states but an entangled state can indeed be used to facilitate the transformation. We dub such transformations as entanglement assisted coherence transformations. Here we emphasize that in entanglement assisted coherence transformations the coherence of the entangled state is {\it not} consumed at all. We also provide necessary and sufficient conditions for the entanglement assisted coherence transformations. In this way we completely characterize the allowed manipulations of the coherence of pure quantum states and thus, our work contributes towards a complete resource theory of coherence based on incoherent operations.

The consideration of catalytic transformations is very natural and has resulted in strikingly nontrivial consequences. One such instance is the introduction of many second laws of quantum thermodynamics superseding the common wisdom of single second law of macroscopic thermodynamics. Now given the importance of quantum coherence in quantum thermodynamics and various other avenues, we hope that our results which provide the limitations on coherence transformations will be extremely helpful in the processing of quantum coherence in such situations and in particular, in the context of {\it single-shot} quantum  information theory. Further, it will be important to analyze the possibility of {\it self catalysis} in greater detail in future as the catalysts in this case are readily available.

\smallskip
\noindent
\begin{acknowledgments}
 U.S. acknowledges the research fellowship of Department of Atomic Energy, Government of India and thanks Avijit Misra, Manabendra Nath Bera and Samyadeb Bhattacharya for useful discussions on the catalytic coherence. This  project is supported by National Natural Science Foundation of China (11171301, 11571307) and by the Doctoral Programs Foundation of the Ministry of Education of China (J20130061).
\end{acknowledgments}

\appendix
\section{Some useful earlier results}
\label{appendix:1}
Here, for the sake of completeness, we restate various results obtained earlier by other researchers which are useful for our work. The following theorem is due to  Ref. \cite{Sun2005}.

\begin{thm}[\cite{Sun2005}]\label{th:Sun2005}
For a bipartite qubit system with states $\ket{\psi_1}=\sum^4_{i=1}\sqrt{\alpha_i}\ket{ii}$, $\ket{\psi_2}=\sum^4_{i=1}\sqrt{\beta_i}\ket{ii}$ such that $\ket{\psi_1}\nrightarrow \ket{\psi_2}$ under local operations and classical communication (LOCC). Without loss of generality we can assume that the coefficients $\{\alpha_i\}$, $\{\beta_i\}$ are real and arranged in decreasing order. Then the necessary and sufficient conditions for the existence of a catalyst $\ket{\phi}=\sqrt{a}\ket{11}+\sqrt{1-a}\ket{22}$ $(a\in(0.5,1))$ for these two states are the following two conditions: $\alpha_1\leq \beta_1;~\alpha_1+\alpha_2>\beta_1+\beta_2;~\alpha_1+\alpha_2+\alpha_3\leq\beta_1+\beta_2+\beta_3$, and
\begin{align}
&\max\left\{\frac{\alpha_1+\alpha_2-\beta_1}{\beta_2+\beta_3},1-\frac{\alpha_4-\beta_4}{\beta_3-\alpha_3}\right\}\nonumber\\
&\leq a\leq \min\left\{\frac{\beta_1}{\alpha_1+\alpha_2},\frac{\beta_1-\alpha_1}{\alpha_2-\beta_2},1-\frac{\beta_4}{\alpha_3+\alpha_4}\right\}.
\end{align}
\end{thm}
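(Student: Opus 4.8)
The plan is to translate catalytic LOCC convertibility into a majorization statement and then solve the resulting one–parameter family of linear inequalities. Writing $\alpha=(\alpha_1,\dots,\alpha_4)$ and $\beta=(\beta_1,\dots,\beta_4)$ for the (decreasingly ordered) squared–Schmidt vectors of $\ket{\psi_1}$ and $\ket{\psi_2}$, and $(a,1-a)$ for that of $\ket{\phi}$, the entanglement analogue of Theorem \ref{th:DBG} (Nielsen's theorem) says that $\ket{\psi_1}\otimes\ket{\phi}$ can be converted to $\ket{\psi_2}\otimes\ket{\phi}$ by LOCC exactly when $\alpha\otimes(a,1-a)\prec\beta\otimes(a,1-a)$. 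So the theorem amounts to: for which $a\in(1/2,1)$ do the $8$-component vectors $x:=\alpha\otimes(a,1-a)$ and $y:=\beta\otimes(a,1-a)$ satisfy $x\prec y$, and when is such an $a$ available. One may take $a>1/2$ since $a=1/2$ is the maximally entangled catalyst, which never helps, and $a\leftrightarrow 1-a$ is a relabeling.

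First I would extract the three partial-sum conditions. Since $a>1-a$, the largest entry of $x$ is $a\alpha_1$ and that of $y$ is $a\beta_1$, so the $k=1$ majorization inequality is equivalent to $\alpha_1\le\beta_1$, hence this condition is necessary. Dually the smallest entries are $(1-a)\alpha_4$ and $(1-a)\beta_4$, and since the totals agree, $x\prec y$ forces $\alpha_4\ge\beta_4$, i.e.\ $\alpha_1+\alpha_2+\alpha_3\le\beta_1+\beta_2+\beta_3$. If moreover $\alpha_1+\alpha_2\le\beta_1+\beta_2$ held, then $\alpha_1\le\beta_1$, $\alpha_1+\alpha_2\le\beta_1+\beta_2$ and $\alpha_1+\alpha_2+\alpha_3\le\beta_1+\beta_2+\beta_3$ would give $\alpha\prec\beta$, i.e.\ $\ket{\psi_1}\to\ket{\psi_2}$ already without a catalyst, contradicting $\ket{\psi_1}\nrightarrow\ket{\psi_2}$. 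Hence $\alpha_1+\alpha_2>\beta_1+\beta_2$. This yields the three conditions as necessary, and they will also be used (together with the bound on $a$) in proving sufficiency.

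For the core argument I would exploit that, for fixed $k$, the sum of the $k$ largest entries of $x$ is a convex piecewise-linear function of $a$ on $(1/2,1)$, with breakpoints precisely where two products $a\alpha_i$ and $(1-a)\alpha_j$ coincide, i.e.\ at $a=\alpha_i/(\alpha_i+\alpha_j)$, and likewise for $y$ with the $\beta$'s. Thus $x\prec y$ reduces to comparing, for each $k\in\{1,\dots,7\}$, two explicit convex piecewise-linear functions, so the admissible set of $a$ is an intersection of intervals whose endpoints are ratios of the $\alpha$'s and $\beta$'s. Running the finite case analysis over the possible interleavings of the eight products, one checks that only five comparisons can be binding: in the regime $\alpha_1/(\alpha_1+\alpha_2)\le a\le\beta_1/(\beta_1+\beta_2)$ the two largest of $x$ are $a\alpha_1,a\alpha_2$ while the two largest of $y$ are $a\beta_1,(1-a)\beta_1$, so $k=2$ gives $a(\alpha_1+\alpha_2)\le\beta_1$, i.e.\ $a\le\beta_1/(\alpha_1+\alpha_2)$; $k=3$ gives $a\le(\beta_1-\alpha_1)/(\alpha_2-\beta_2)$; $k=4$ gives $a\ge(\alpha_1+\alpha_2-\beta_1)/(\beta_2+\beta_3)$; and the $2$-smallest and $3$-smallest comparisons give $a\le 1-\beta_4/(\alpha_3+\alpha_4)$ and $a\ge 1-(\alpha_4-\beta_4)/(\beta_3-\alpha_3)$. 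Every other majorization inequality is automatically satisfied once the three partial-sum conditions hold. Collecting the bounds produces exactly the displayed sandwich; finally one verifies, using $\alpha_1\le\beta_1<\alpha_1+\alpha_2$, $\alpha_2>\beta_2$, $\alpha_3<\beta_3$, $\alpha_4\ge\beta_4$, that the interval $[\max\{\cdot\},\min\{\cdot\}]$ is nonempty and lies inside $(1/2,1)$ precisely when the three conditions hold, closing both directions.

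The step I expect to be the main obstacle is the bookkeeping of this case analysis: several distinct orderings of the eight products occur depending on the sub-interval of $a$, and one must check both that within the interval cut out by the five displayed bounds there is always an $a$ realizing a ``good'' ordering, and that all the non-displayed majorization inequalities are redundant there. Organizing the work by interleaving patterns rather than enumerating all orderings, and using monotonicity in $a$, is what keeps it tractable.
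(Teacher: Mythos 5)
The paper does not actually prove Theorem \ref{th:Sun2005}: it is quoted from Ref.~\cite{Sun2005} in the appendix ``for the sake of completeness'' and used as a black box in Proposition \ref{prop:str_of_cat}. There is therefore no in-paper proof to compare yours against, so I assess your argument on its own terms; your strategy is in any case the one the cited reference follows.

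The reduction to the majorization $\alpha\otimes(a,1-a)\prec\beta\otimes(a,1-a)$ via Nielsen's theorem is correct, your derivations of the three partial-sum conditions as necessary are all sound (the last one correctly using incomparability), and each of the five bounds on $a$ does arise from the comparison you attribute it to, with $\alpha_2>\beta_2$ and $\alpha_3<\beta_3$ (consequences of the three conditions) fixing the directions of the divisions. The genuine gap is that the heart of the theorem --- that in the relevant range of $a$ the eight products interleave so that exactly these five comparisons are binding and the remaining majorization inequalities are automatically implied --- is asserted (``running the finite case analysis \dots one checks'') rather than carried out. This is not a formality: the ordering of the eight products changes across sub-intervals of $a$ (e.g.\ whether $a\alpha_2$ or $(1-a)\alpha_1$ is second largest depends on $a\gtrless\alpha_1/(\alpha_1+\alpha_2)$, which your hypotheses do not pin down), and the case analysis over these interleavings \emph{is} the sufficiency proof. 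Separately, your closing claim that the interval $[\max\{\cdot\},\min\{\cdot\}]$ is nonempty and contained in $(1/2,1)$ ``precisely when the three conditions hold'' is neither part of the statement nor justified: the theorem only asserts that a catalyst with a given $a$ exists iff the three conditions hold \emph{and} $a$ lies in the interval, and for some $(\alpha,\beta)$ satisfying the three conditions the interval can be empty (no two-dimensional catalyst exists). You should drop that claim rather than lean on it to ``close both directions.''
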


The following theorem is from Refs. \cite{Feng2004, Feng2005}.
\begin{thm}[\cite{Feng2005}]\label{th:Feng2005}
For two $d$-dimensional probability vectors $p$ and $q$ with the components arranged in decreasing order, there exists a probability vector $r$ such that $P(p\otimes r \rightarrow q\otimes r)> P(p\rightarrow q)$ {\it if and only if}
\begin{eqnarray*}
P\left(p\rightarrow q\right)< \min\left\{\frac{p_d}{q_d}, 1\right\}.
\end{eqnarray*}
\end{thm}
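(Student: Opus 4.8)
The plan is to reduce everything to Vidal's formula for the optimal conversion probability \cite{Vidal1999}, which for probability vectors in decreasing order is the analogue of Eq.~\eqref{eq:opt-prob}, namely $P(p\to q)=\min_{l}C_l(p)/C_l(q)$ with $C_l(x)=\sum_{i=l}^{d}x_i$ the tail sums (equivalently, $P(p\to q)$ is the largest $t\in[0,1]$ with $p\prec t\,q+(1-t)(1,0,\dots,0)$). Two of these ratios are pinned: $C_1(p)=C_1(q)=1$, so the $l=1$ ratio is always $1$, and $C_d(p)/C_d(q)=p_d/q_d$. Hence $P(p\to q)\le\min\{1,\,p_d/q_d\}$ for any pair, and the theorem says that strictness here is precisely the condition for catalytic enhancement.

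For the ``only if'' direction I would argue by contrapositive. For any catalyst $r=(r_1,\dots,r_m)$ in decreasing order, in the sorted products $p\otimes r$ and $q\otimes r$ the top cut again has ratio $1$ and the bottom cut has ratio $(p_dr_m)/(q_dr_m)=p_d/q_d$; applying Vidal's formula to the products therefore gives $P(p\otimes r\to q\otimes r)\le\min\{1,\,p_d/q_d\}$ for every $r$. So if $P(p\to q)$ already attains the value $\min\{1,\,p_d/q_d\}$, no catalyst can beat it. (The sub-case $P(p\to q)=1$ is also immediate, since then $p\prec q$ implies $p\otimes r\prec q\otimes r$.) This is exactly the argument used for the preceding proposition, now run at both endpoints.

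The substantive direction is ``if'': assuming $\lambda:=P(p\to q)<\min\{1,\,p_d/q_d\}$, I must exhibit $r$ with $P(p\otimes r\to q\otimes r)>\lambda$. First I would extract structure from the hypothesis: $\lambda<1$ forces the minimum in Vidal's formula to be attained at some $l\ge2$, and $\lambda<p_d/q_d$ forces it not to be attained solely at $l=d$, so there is an interior bottleneck $l_0\in\{2,\dots,d-1\}$ with $C_{l_0}(p)=\lambda\,C_{l_0}(q)$ while every other tail-sum ratio is $\ge\lambda$. The goal is then to build $r$ so that in the sorted products \emph{every} ratio $C_k(p\otimes r)/C_k(q\otimes r)$ is strictly above $\lambda$, except the forced bottom-cut ratio, which equals $p_d/q_d>\lambda$. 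I would do this by an explicit construction in the spirit of the known catalyst constructions for deterministic entanglement transformations \cite{Jonathan1999}: pick the components of $r$ so that tensoring ``opens a gap'' at the interior bottleneck $l_0$ while disturbing all the already-slack ratios by only a small, controlled amount, and take $r$ of sufficiently large finite dimension with its free parameters tuned so that the disturbances stay small enough to preserve strict inequality everywhere. (A variational repackaging is possible as well — letting $S$ be the supremum of $P(p\otimes r\to q\otimes r)$ over finite-dimensional catalysts and deriving a contradiction from $S=\lambda$ via an improving perturbation at the interior bottleneck — but the explicit route is cleaner.)

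I expect the explicit construction to be the main obstacle: after tensoring, the sorting of $p\otimes r$ interleaves in an $r$-dependent way and the number of tail-sum conditions for the product grows combinatorially, so the delicate point is to make all of them strict with a single choice of $r$. The roles of the two hypotheses are transparent, though: $\lambda<p_d/q_d$ says precisely that the one ratio we can never move is not binding, and $\lambda<1$ guarantees a genuine interior bottleneck to exploit; together they are exactly what the construction needs. This is the content of the Feng--Duan--Ying theorem being quoted, and in the paper I would ultimately cite their construction rather than reproduce it.
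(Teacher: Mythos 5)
The first thing to note is that the paper does not prove this statement at all: it is quoted in the appendix from Refs.~\cite{Feng2004, Feng2005} and is used as a black box in the proof of the proposition on catalytic enhancement of the conversion probability. So there is no in-paper argument to compare yours against, only the citation.

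Your ``only if'' direction is correct and self-contained: the top and bottom tail-sum ratios of the sorted products are pinned at $1$ and at $p_d r_m/(q_d r_m)=p_d/q_d$ (with $m=\dim r$) for every catalyst $r$, so $P(p\otimes r\to q\otimes r)\leq\min\left\{1,p_d/q_d\right\}$ always, and equality of $P(p\to q)$ with this bound rules out any improvement. This is exactly the argument the paper itself runs (for the bottom ratio only) in the proposition immediately preceding the one that invokes this theorem. The ``if'' direction, however, is the entire nontrivial content, and your proposal does not establish it. Locating an interior bottleneck $l_0\in\{2,\dots,d-1\}$ and asserting that one can choose $r$ so that every tail-sum ratio of the sorted products strictly exceeds $\lambda$ is a correct reformulation of the goal, not an argument: as you yourself flag, the sorting of $p\otimes r$ interleaves in an $r$-dependent way, the number of cuts grows with $\dim r$, and the minimum $\lambda$ may be attained at several interior cuts simultaneously, so the ``open a gap while disturbing the slack ratios only slightly'' step is precisely what the Feng--Duan--Ying construction has to (and does) supply. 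Since you close by deferring to their construction, your treatment of the hard direction collapses to the same citation the paper uses --- a legitimate way to quote a known theorem, but it leaves the proposal short of an independent proof of the substantive implication.
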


In the case of catalytic majorization the necessary and sufficient conditions for catalytic transformations are obtained independently in Refs. \cite{Turgut2007} and \cite{Klimesh2007}. In Ref. \cite{Turgut2007}, the following result was obtained.
\begin{lem}[\cite{Turgut2007}]
\label{lem:turgut}
Let $p$ and $q$ be two distinct $d$-element probability vectors arranged in decreasing order with $p$ having nonzero elements. Then the existence of a vector $r$ such that $p\otimes r\prec q\otimes r$ is equivalent to the following three strict inequalities:
\begin{eqnarray}
A_\alpha(p)&>&A_\alpha(q) \text{~~for~} \alpha\in(-\infty, 1);\\
A_\alpha(p)&<&A_\alpha(q) \text{~~for~} \alpha\in(1, \infty);\\
S(p)&>&S(q).
\end{eqnarray}
where $A_\alpha:=(\frac{1}{d}\sum^{d}_{i=1}p^\alpha_i)^{\frac{1}{\alpha}}$, and $S(p)=-\sum^d_{i=1}p_i \log p_i$ is the Shannon entropy. For $\alpha=0$, $A_0(p)=(\prod p_i)^{1/d}$. If any component of vector $p$ is zero, then $A_\alpha=0$ for all
$\alpha\leq 0$.
\end{lem}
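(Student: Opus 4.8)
\emph{Proof proposal.} The plan is to prove the two implications separately, leaning on two structural properties of the functions involved. First, \emph{additivity under tensoring}: since $\tfrac{1}{dm}\sum_{i,j}(p_ir_j)^{\alpha}=\big(\tfrac1d\sum_ip_i^{\alpha}\big)\big(\tfrac1m\sum_jr_j^{\alpha}\big)$, one has $A_\alpha(p\otimes r)=A_\alpha(p)A_\alpha(r)$ and $S(p\otimes r)=S(p)+S(r)$. Second, \emph{strict Schur monotonicity}: writing $A_\alpha(x)^{\alpha}=\tfrac1n\sum_i\phi(x_i)$ with $\phi(t)=t^{\alpha}$, the strict convexity of $\phi$ for $\alpha\in(1,\infty)\cup(-\infty,0)$, its strict concavity for $\alpha\in(0,1)$, and the sign of the outer exponent $1/\alpha$ together show that $A_\alpha$ is strictly Schur concave for $\alpha\in(-\infty,1)$ and strictly Schur convex for $\alpha\in(1,\infty)$ (with $A_0$ the geometric mean), while $S$ is strictly Schur concave; ordinary majorization monotonicity of these functions is the backbone of both halves.

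\emph{Necessity.} Given $r$ with $p\otimes r\prec q\otimes r$, I would perturb $r$ to have no zero entry, apply strict Schur monotonicity to the $dm$-vectors $p\otimes r$, $q\otimes r$, and then cancel $A_\alpha(r)>0$ (respectively subtract $S(r)$) using additivity, obtaining $A_\alpha(p)\ge A_\alpha(q)$ for $\alpha<1$, $A_\alpha(p)\le A_\alpha(q)$ for $\alpha>1$, and $S(p)\ge S(q)$; when $q$ has a zero entry the cases $\alpha\le0$ are automatic since then $A_\alpha(q)=0<A_\alpha(p)$. For strictness I would invoke $p\ne q$: equality in any one of these relations would, by strict Schur monotonicity, force $p\otimes r$ to be a permutation of $q\otimes r$, whence $\sum_ip_i^{\alpha}=\sum_iq_i^{\alpha}$ for all $\alpha\ne0$; since $\alpha\mapsto\sum_ip_i^{\alpha}$ determines the multiset $\{p_i\}$ and both vectors are sorted, this would give $p=q$, a contradiction.

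\emph{Sufficiency.} Assuming the three strict inequalities, I would first send $\alpha\to\pm\infty$ and use $A_\alpha(x)\to\max_i x_i$, respectively $\min_i x_i$, to read off the boundary conditions $p_1\le q_1$, $p_d\ge q_d$; a small downward perturbation of $p$ (harmless by strictness) together with, if needed, tensoring $q$ with an ancilla to make it full rank then lets me assume $p_1<q_1$, $p_d>q_d$ and $q$ strictly positive. Next I would reformulate the target via the layer-cake criterion ``$x\prec y\iff\sum_k(x_k-t)_+\le\sum_k(y_k-t)_+$ for all $t\ge0$'': writing $(p_ir_j-t)_+=r_j\,G_p(t/r_j)$ with $G_p(s):=\sum_i(p_i-s)_+$, the relation $p\otimes r\prec q\otimes r$ becomes $\sum_jr_jG_p(t/r_j)\le\sum_jr_jG_q(t/r_j)$ for all $t\ge0$, which after the substitution $u=\log t$ reads $(\widetilde H*\widetilde\mu)(u)\le0$ for all $u\in\real$, where $\widetilde H(v):=(G_p-G_q)(e^v)$ and $\widetilde\mu:=\sum_j r_j\,\delta_{\log r_j}$ is a probability measure on $\real$. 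A direct integration then gives the bilateral Laplace transform $\int_{\real}e^{\beta v}\widetilde H(v)\,dv=\big(\sum_ip_i^{\beta+1}-\sum_iq_i^{\beta+1}\big)/[\beta(\beta+1)]$, with removable singularities at $\beta=0$ (the Shannon condition) and $\beta=-1$ (the geometric-mean condition), so the hypotheses say \emph{exactly} that this transform is strictly negative for every $\beta\in\real$, while the boundary conditions force $\widetilde H\le0$ near the ends of its (essentially compact) support. It then remains to choose $\widetilde\mu$ — equivalently the catalyst $r$ — so that convolution with it turns ``Laplace transform negative everywhere'' into ``negative everywhere''.

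\emph{Main obstacle.} This last step is where the real work lies. A single long ``flat'' averaging window already forces the convolution negative on the bulk of $u$ (there it is a positive multiple of $\int_{\real}\widetilde H(v)\,dv<0$), and the boundary conditions take care of the two end regions, but the ranges of $u$ where the window only partially overlaps the positive part of $\widetilde H$ require an iterated construction: one smooths in stages, each stage flattening a longer stretch of the log-profile while the whole family of Laplace-transform inequalities guarantees there is always room to continue. I expect reproducing (or streamlining) this iteration — essentially Turgut's explicit catalyst construction, with Klimesh's complete family of monotone potentials as an alternative route — to be the crux. Afterwards one closes the argument by approximating $\widetilde\mu$ with a finitely supported probability vector, undoing the full-rank/ancilla reduction, and using the continuity of every $A_\alpha$ to pass from the perturbed $p$ back to $p$ itself.
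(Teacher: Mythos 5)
First, a point of reference: the paper does not prove this statement. Lemma \ref{lem:turgut} sits in the appendix among results ``restated for the sake of completeness'' and is imported as a black box from Turgut's work to drive Propositions \ref{ne-sf-c}--\ref{prop:rem-zero}; there is no in-paper argument to compare yours against, so what you have written is an attempt at Turgut's theorem itself. Within that frame, your necessity direction is essentially complete: multiplicativity of $A_\alpha$ and additivity of $S$ under tensor products, the strict Schur concavity/convexity in the stated ranges of $\alpha$, cancellation of the catalyst factor, and the exponential-sum argument ruling out the equality case all go through (restricting $r$ to its support is cleaner than perturbing it and avoids re-verifying the majorization). Your reformulation of sufficiency is also the correct one and matches Turgut's: the layer-cake criterion, the convolution form $(\widetilde H*\widetilde\mu)(u)\le 0$, and the identity $\int e^{\beta v}\widetilde H(v)\,dv=\bigl(\sum_i p_i^{\beta+1}-\sum_i q_i^{\beta+1}\bigr)/[\beta(\beta+1)]$ faithfully translate the three families of strict inequalities into strict negativity of the bilateral Laplace transform of $\widetilde H$ for every real $\beta$, with $\beta=0$ and $\beta=-1$ giving the Shannon and geometric-mean conditions.

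The genuine gap is the step you yourself flag, and it is not a technicality: producing a finitely supported probability measure $\widetilde\mu$ with $\widetilde H*\widetilde\mu\le 0$ everywhere from negativity of all the Laplace transforms is the entire content of the theorem --- the transform inequalities are the cheap ``necessary'' half, and the converse is precisely what separates catalytic from ordinary majorization. ``A long flat window kills the bulk and one iterates, with the transform inequalities guaranteeing room to continue'' is a description of a strategy, not an argument: one must quantify how much negativity each inequality buys at each scale, prove the iteration closes (or exhibit a single explicit $\widetilde\mu$, e.g.\ a product of uniform measures on geometric progressions with controlled ratios and lengths), and check the result corresponds to a finite-dimensional $r$. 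Two of your reductions also need care. The boundary cases $p_1=q_1$ or $p_d=q_d$ are compatible with all three strict families (e.g.\ $p=(0.5,0.3,0.2)$, $q=(0.5,0.4,0.1)$), so they cannot simply be perturbed away; and the direction of any perturbation matters, since to transfer the conclusion back you need $p\otimes r\prec p^\varepsilon\otimes r\prec q\otimes r$, i.e.\ $p\prec p^\varepsilon$, which pushes $p_1$ \emph{up} and works against the strict boundary inequality you were trying to create. As it stands the proposal proves necessity and correctly frames sufficiency, but does not prove it.
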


The following lemma is from  Ref. \cite{Brandao2015}.

\begin{lem}[\cite{Brandao2015}]
\label{Shurconv}
The R\'enyi entropies $S_\alpha$ are strictly Schur concave for $\alpha\in(-\infty, 0)\cup(0, \infty)$. The R\'enyi entropies for $\alpha=0, \pm\infty$ are Schur concave. Also, the function $\sum_i \log p_i$ is strictly Schur concave.
\end{lem}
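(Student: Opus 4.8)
\emph{Plan.} The strategy is to reduce every claim to the \emph{Schur--Ostrowski criterion}: a symmetric, continuously differentiable function $f$ on the interior of the probability simplex is Schur concave if and only if $(p_i-p_j)\left(\partial_i f-\partial_j f\right)\leq 0$ for all $i\neq j$, and it is \emph{strictly} Schur concave when this bracket is strictly negative whenever $p_i\neq p_j$. Since each $S_\alpha$ is symmetric, this is the natural tool. Rather than differentiating $S_\alpha$ directly, I would first analyse the power sum $P_\alpha(p):=\sum_{i}p_i^{\alpha}$ and then transport its Schur behaviour to $S_\alpha=\frac{\mathrm{sgn}(\alpha)}{1-\alpha}\log P_\alpha$ by composing with the strictly increasing map $\log$ (which preserves the strict Schur order) and multiplying by the constant $\mathrm{sgn}(\alpha)/(1-\alpha)$ (a negative constant reverses the order). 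The value $\alpha=1$ (Shannon entropy) and the separate function $\sum_i\log p_i$ would be handled by a direct application of the criterion, and the boundary values $\alpha=0,\pm\infty$ by a limiting argument.

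\emph{Core computation.} For $\alpha\notin\{0,1\}$ one has $\partial_i P_\alpha=\alpha\,p_i^{\alpha-1}$, so the Schur--Ostrowski bracket of $P_\alpha$ is $\alpha\,(p_i-p_j)\left(p_i^{\alpha-1}-p_j^{\alpha-1}\right)$. Because $t\mapsto t^{\alpha-1}$ is strictly increasing for $\alpha>1$ and strictly decreasing for $\alpha<1$, the factor $(p_i-p_j)(p_i^{\alpha-1}-p_j^{\alpha-1})$ is strictly positive for $\alpha>1$ and strictly negative for $\alpha<1$ whenever $p_i\neq p_j$; combining with the sign of $\alpha$ shows that $P_\alpha$ is strictly Schur convex for $\alpha>1$ and for $\alpha<0$, and strictly Schur concave for $0<\alpha<1$. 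Transporting through $\log$ and the prefactor then yields, in each of the regimes $0<\alpha<1$, $\alpha>1$, $\alpha<0$, that $S_\alpha$ is strictly Schur concave. The delicate point is the sign bookkeeping: in every regime one must check that the product of $\mathrm{sgn}(\alpha)/(1-\alpha)$ with the Schur type of $\log P_\alpha$ lands on \emph{concavity}, with particular care for $\alpha<0$, where the factor $\mathrm{sgn}(\alpha)=-1$ in the paper's definition is exactly what restores the correct sign (the unsigned Rényi entropy is Schur \emph{convex} there).

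\emph{Remaining and boundary cases.} For $\alpha=1$ the bracket of $S_1(p)=-\sum_i p_i\log p_i$ is $-(p_i-p_j)(\log p_i-\log p_j)<0$ for $p_i\neq p_j$ by strict monotonicity of $\log$, giving strict Schur concavity (equivalently, $S_1$ is strictly concave and symmetric). For $f(p)=\sum_i\log p_i$ the bracket is $(p_i-p_j)\left(\tfrac{1}{p_i}-\tfrac{1}{p_j}\right)=-\frac{(p_i-p_j)^2}{p_ip_j}<0$ whenever $p_i\neq p_j$, so it too is strictly Schur concave. For the boundary values I would identify $S_0(p)=\log\#\{i:p_i>0\}$, $S_\infty(p)=-\log\max_i p_i$, and (with the paper's sign convention) $S_{-\infty}(p)=\log\min_i p_i$ as pointwise limits of the $S_\alpha$ already treated, e.g.\ $\sum_i p_i^{\alpha}\to\#\{i:p_i>0\}$ as $\alpha\to 0^+$. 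Since Schur concavity passes to pointwise limits, each is Schur concave (equivalently, $\#\{i:p_i>0\}$ and $\min_i p_i$ are Schur concave while $\max_i p_i$ is Schur convex), and none is strict because each depends only on the support size or on a single extreme coordinate.

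\emph{Main obstacle.} The hardest part is not any individual computation but the uniform control of \emph{strictness} together with the sign bookkeeping across the four regimes of $\alpha$, compounded by the fact that $S_\alpha$ is differentiable only on the open simplex. On the boundary (probability vectors with vanishing components) the Schur--Ostrowski criterion does not apply directly; there I would extend the conclusion by continuity of $S_\alpha$ for $\alpha>0$ and by the same limiting device used for $\alpha=0,\pm\infty$, noting that for $\alpha<0$ the entropy is finite only on the interior, which is precisely the regime in which the strict inequalities are invoked in Propositions~\ref{ne-sf-c} and \ref{gn-sf-nc}.
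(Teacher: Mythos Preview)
Your argument is correct. The Schur--Ostrowski computation for $P_\alpha$, the sign bookkeeping through the prefactor $\mathrm{sgn}(\alpha)/(1-\alpha)$, the separate treatment of $\alpha=1$ and of $\sum_i\log p_i$, and the limiting argument for $\alpha=0,\pm\infty$ all go through as you describe.

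There is, however, nothing to compare against: the paper does \emph{not} prove Lemma~\ref{Shurconv}. It is stated in the appendix as a quoted result from Ref.~\cite{Brandao2015}, and the only additional remark the authors make is the trivial observation that the conclusion carries over to $\tilde S_\alpha(\cdot)/|\alpha|$ because this differs from $S_\alpha$ by an additive constant and a positive scalar (and, at $\alpha=0$, coincides with $\frac{1}{d}\sum_i\ln p_i$). So your proposal is not an alternative route to the paper's proof; it is a self-contained proof where the paper offers none. What you gain is independence from the external reference; what the paper gains by citing is brevity.
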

Since $\tilde{S}_\alpha(\psi^\md)$ is equal to $S_\alpha(\psi^\md)-\ln d$ and $\lim_{\alpha\to 0^+}\tilde{S}_\alpha(\psi^\md)/|\alpha|=\frac{1}{d}\sum_i\ln\psi^\md_i$, the above lemma holds for functions $\tilde{S}_\alpha(\cdot)/|\alpha|$ too. That is, $\tilde{S}_\alpha(\cdot)/|\alpha|$ are strictly Schur concave for all $\alpha\in(-\infty, +\infty)$.

The following lemma is from  Ref. \cite{Mueller2015}.
\begin{lem}[\cite{Mueller2015}]
\label{lem:Muller}
Let $p$ and $q$ be $d$-dimensional probability vectors with components being arranged in decreasing order and $p\neq q$. Then there exists a $k$-partite probability distribution $r_{1,\ldots,k}$ such that
\begin{eqnarray*}
q \otimes r_{1,\ldots,k} \prec p\otimes(\otimes r_1\otimes \ldots\otimes r_k)
\end{eqnarray*}
if and only if $\mathrm{Rank}(p)\leq \mathrm{Rank}(q)$ and $S(p)<S(q)$. Here, we can always choose $k=3$. $S(p)=-\sum^d_{i=1}p_i \log p_i$ is the Shannon entropy.
\end{lem}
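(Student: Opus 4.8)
The plan is to prove the two directions of the equivalence separately, reading the lemma as a statement about \emph{correlated}-catalytic majorization: the right-hand side uses the product of the marginals $r_1\otimes\cdots\otimes r_k$ while the left-hand side uses the full joint distribution $r_{1,\ldots,k}$, so the total correlation (multipartite mutual information) $I(r_{1,\ldots,k})=\sum_i S(r_i)-S(r_{1,\ldots,k})\ge 0$ of the catalyst is the one free quantity the construction must exploit. The necessity direction I expect to be short, resting only on the two Schur-concave monotones $S$ (Shannon, $\alpha=1$) and $S_0=\log\mathrm{Rank}$ (the $\alpha=0$ R\'enyi entropy) from Lemma \ref{Shurconv}, together with additivity of entropy under tensor products and the elementary support bound for joint distributions. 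The sufficiency direction is the genuine content and carries essentially all of the difficulty.

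For necessity, since $q\otimes r_{1,\ldots,k}\prec p\otimes r_1\otimes\cdots\otimes r_k$ and $S$ is Schur concave, I get $S(q\otimes r_{1,\ldots,k})\ge S(p\otimes r_1\otimes\cdots\otimes r_k)$; additivity then gives $S(q)+S(r_{1,\ldots,k})\ge S(p)+\sum_i S(r_i)$, i.e. $S(q)-S(p)\ge I(r_{1,\ldots,k})\ge 0$, so $S(p)\le S(q)$. Strictness follows because equality $S(p)=S(q)$ would force $I(r_{1,\ldots,k})=0$, hence $r_{1,\ldots,k}=r_1\otimes\cdots\otimes r_k$, collapsing the relation to ordinary catalytic majorization $q\otimes r\prec p\otimes r$ between vectors of equal entropy; strict Schur-concavity of $S$ then makes $p$ and $q$ equal up to permutation, contradicting $p\neq q$. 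For the rank condition I run the same argument with the Schur-concave $S_0=\log\mathrm{Rank}$, using $\mathrm{Rank}(q\otimes r_{1,\ldots,k})=\mathrm{Rank}(q)\,\mathrm{Rank}(r_{1,\ldots,k})$, $\mathrm{Rank}(p\otimes r_1\otimes\cdots\otimes r_k)=\mathrm{Rank}(p)\prod_i\mathrm{Rank}(r_i)$, and the inclusion of the joint support in the product of the marginal supports, which gives $\mathrm{Rank}(r_{1,\ldots,k})\le\prod_i\mathrm{Rank}(r_i)$; these combine to yield $\mathrm{Rank}(p)\le\mathrm{Rank}(q)$.

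For sufficiency, assume $\mathrm{Rank}(p)\le\mathrm{Rank}(q)$ and $\delta:=S(q)-S(p)>0$, and I must exhibit a tripartite joint distribution $r_{123}$ with marginals $r_i=\mathrm{Tr}_{\{1,2,3\}/i}\,r_{123}$ such that $q\otimes r_{123}\prec p\otimes r_1\otimes r_2\otimes r_3$. The guiding principle, read off from the entropy balance $I(r_{1,\ldots,k})\le\delta$ above, is that the catalyst should carry a strictly positive amount of correlation $I(r_{123})\le\delta$ that is released while every marginal $r_i$ is preserved, and it is exactly this released correlation that lets the product-of-marginals side majorize the correlated side; this is what reduces the infinite family of R\'enyi/trumping constraints needed for an \emph{uncorrelated} catalyst (Lemma \ref{lem:turgut}) to the single Shannon constraint $\delta>0$. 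Concretely I would first pad $p$ and $q$ with zeros to a common dimension (legitimate since $\mathrm{Rank}(p)\le\mathrm{Rank}(q)$ bounds the support of the target), then build $r_{123}$ from a counter/ladder register that becomes correlated with the system index, together with two reservoir registers whose marginals carry the bookkeeping, using the slack $\delta>0$ to push every partial-sum majorization inequality through. I expect the decisive obstacle to be precisely this construction: producing \emph{exact} (not merely approximate) majorization with a \emph{finite} catalyst and certifying that $k=3$ subsystems always suffice, while keeping the marginal constraints $r_i=\mathrm{Tr}_{\{1,2,3\}/i}\,r_{123}$ mutually consistent. This quantitative step is the heart of the matter, and here I would follow the correlated-catalyst construction of Ref. \cite{Mueller2015} rather than attempt an independent combinatorial argument.
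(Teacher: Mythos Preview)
The paper does not supply its own proof of this lemma: it is listed in the appendix under ``Some useful earlier results'' and is simply restated from Ref.~\cite{Mueller2015} with no argument attached. So there is nothing in the paper to compare your proposal against beyond the bare citation.

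That said, your necessity argument goes beyond what the paper offers and is essentially correct. The entropy and rank arguments via Schur concavity, additivity, and the support inclusion $\mathrm{Rank}(r_{1,\ldots,k})\le\prod_i\mathrm{Rank}(r_i)$ are clean. One small point you glide over: from $S(p)=S(q)$ you get that $q\otimes r$ is a permutation of $p\otimes r$ by strict Schur concavity, but you then need one more line to conclude that $p$ and $q$ themselves are permutations of each other (e.g.\ compare power sums $\sum_i q_i^\alpha\cdot\sum_j r_j^\alpha=\sum_i p_i^\alpha\cdot\sum_j r_j^\alpha$ for all $\alpha$, hence $\sum_i q_i^\alpha=\sum_i p_i^\alpha$, which determines the multiset). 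This is routine but worth stating.

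For sufficiency you correctly identify that the entire difficulty lies in the explicit finite tripartite construction, and you ultimately defer to Ref.~\cite{Mueller2015}. That is exactly what the paper does, so on this half your proposal and the paper coincide: neither proves it, both cite it.
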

Consider a probability vector $x$. Define another subnormalized probability vector $x'$ from the $\varepsilon$ ball $B_\varepsilon(x)$ around $x$, defined as
\begin{eqnarray}
B_\varepsilon(x):=\set{y:\frac{1}{2}\sum_i|y_i-x_i|<\varepsilon}
\end{eqnarray}
for any $\varepsilon>0$. Now we have the following lemma from Ref. \cite{Brandao2015}.
\begin{lem}[\cite{Brandao2015}]\label{lem:ineq_con}
Given any probability vector $x$, for $0<\alpha<1$ and $\varepsilon>0 $, we can construct a probability vector $x'\in B_\varepsilon(x)$ such that
\begin{eqnarray}
S_\alpha(x)\geq S_0(x')+\frac{\log\varepsilon}{1-\alpha}.
\end{eqnarray}
For $\alpha>1$, we can construct another probability vector $x''\in B_\varepsilon(x)$ such that
\begin{eqnarray}
S_\infty(x'')-\frac{\log\varepsilon}{\alpha-1}\geq S_\alpha(x).
\end{eqnarray}
\end{lem}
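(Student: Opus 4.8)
The plan is to prove both inequalities by explicit smoothing constructions. In each case I perturb $x$ within the trace-distance ball $B_\varepsilon(x)$ by deleting (for $0<\alpha<1$) or capping (for $\alpha>1$) the components lying on the ``wrong'' side of a value threshold, and then read off the resulting quantities $S_0(x')=\log(\#\{i:x'_i\neq0\})$ and $S_\infty(x'')=-\log\max_i x''_i$, which are the $\alpha\to0^+$ and $\alpha\to\infty$ limits of $S_\alpha$. Writing $M:=\sum_i x_i^\alpha$, so that $S_\alpha(x)=\frac{1}{1-\alpha}\log M$, the single estimate that makes the exponents line up is a Markov-type split of the tail mass: writing $x_i=x_i^\alpha\,x_i^{1-\alpha}$ and using monotonicity of $s\mapsto s^{1-\alpha}$ gives $\sum_{x_i<t}x_i\le t^{1-\alpha}M$ when $0<\alpha<1$, and $\sum_{x_i>\mu}x_i\le \mu^{1-\alpha}M$ when $\alpha>1$ (where $1-\alpha<0$). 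These two bounds are the workhorses of the whole argument.

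For $0<\alpha<1$ I would define $x'$ by keeping every component with $x_i\ge t$ and zeroing the rest. The deleted mass is $\sum_{x_i<t}x_i\le t^{1-\alpha}M$, so the trace distance from $x$ is at most $\tfrac12 t^{1-\alpha}M$; choosing $t=(2\varepsilon/M)^{1/(1-\alpha)}$ makes this strictly below $\varepsilon$, placing the subnormalized vector $x'$ in $B_\varepsilon(x)$. For the support, each surviving index contributes $x_i^\alpha\ge t^\alpha$, so $\mathrm{rank}(x')\,t^\alpha\le M$, i.e. $\mathrm{rank}(x')\le Mt^{-\alpha}=M^{1/(1-\alpha)}(2\varepsilon)^{-\alpha/(1-\alpha)}$. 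Taking logarithms gives $S_0(x')=\log\mathrm{rank}(x')\le S_\alpha(x)-\frac{\alpha}{1-\alpha}\log(2\varepsilon)$. A one-line check shows $-\frac{\alpha}{1-\alpha}\log(2\varepsilon)\le-\frac{\log\varepsilon}{1-\alpha}$ precisely when $\varepsilon\le 2^{\alpha/(1-\alpha)}$, which holds for $\varepsilon\le 1$, so $S_0(x')\le S_\alpha(x)-\frac{\log\varepsilon}{1-\alpha}$, the first claim.

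For $\alpha>1$ the construction is dual: set $x''_i=\min(x_i,\mu)$, which caps the large components and forces $\max_i x''_i\le\mu$, hence $S_\infty(x'')\ge-\log\mu$. The removed mass is $\sum_{x_i>\mu}(x_i-\mu)\le\sum_{x_i>\mu}x_i\le\mu^{1-\alpha}M$, and choosing $\mu=(M/(2\varepsilon))^{1/(\alpha-1)}$ pushes the trace distance below $\varepsilon$, so $x''\in B_\varepsilon(x)$. Then, using $S_\alpha(x)=-\frac{1}{\alpha-1}\log M$, $S_\infty(x'')\ge-\log\mu=-\frac{1}{\alpha-1}\log\!\big(M/(2\varepsilon)\big)=S_\alpha(x)+\frac{\log(2\varepsilon)}{\alpha-1}$; since $\frac{\log 2}{\alpha-1}>0$ we may drop it to obtain $S_\infty(x'')\ge S_\alpha(x)+\frac{\log\varepsilon}{\alpha-1}$, which is the second claim.

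The one genuinely delicate point is the balancing act in choosing the thresholds. The Markov split couples the deleted or capped mass to the $\alpha$-moment $M$ with exactly the exponent $1-\alpha$, and it is this exponent that converts the $\varepsilon$-budget of the trace-distance ball into the precise prefactors $1/(1-\alpha)$ and $1/(\alpha-1)$ on $\log\varepsilon$; getting any other exponent would destroy the matching with $S_0$ and $S_\infty$. I would therefore carry the strict inequalities $x_i<t$ and $x_i>\mu$ through the estimates so that the trace distance is strictly less than $\varepsilon$, and note that the spurious $\log 2$ terms are harmlessly absorbed for $\varepsilon\le 1$ (or removed outright by taking a marginally smaller threshold), so that no essential loss enters the final bounds.
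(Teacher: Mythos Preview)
The paper does not actually prove this lemma; it merely quotes the statement from \cite{Brandao2015} and refers the reader to \cite{Brandao2015, Renner2004} for the explicit construction of $x'$ and $x''$. Your proposal correctly supplies that construction and is essentially the standard smoothing argument from those references: truncate the components below a threshold $t$ to bound $S_0$, cap the components above a threshold $\mu$ to bound $S_\infty$, and choose the thresholds via the Markov-type inequality $\sum_{x_i\lessgtr c}x_i\le c^{1-\alpha}M$ so that the removed mass stays inside the $\varepsilon$-ball. The handling of the residual $\log 2$ (valid for the natural range $\varepsilon\le 1$) and the remark on achieving strict inequality by a marginally tighter threshold are both fine. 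In short, you have reproduced precisely the argument the paper only cited.
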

The explicit construction of $x'$ and $x''$ from a given probability vector $x$ can be found in Refs. \cite{Brandao2015, Renner2004}.

\bibliographystyle{apsrev4-1}
 \bibliography{s-cat-coh-lit}

\end{document}